\documentclass[review]{elsarticle}

\usepackage[margin=2.7cm]{geometry} 
\usepackage{lineno,hyperref}
\modulolinenumbers[5]

\usepackage{textcomp}
\usepackage{array}
\usepackage{listings}
\usepackage{setspace}
\usepackage{mathptmx}
\usepackage[table, svgnames]{xcolor}
\usepackage{colortbl}
\usepackage{graphicx}
\usepackage{amssymb, amsmath}
\usepackage{blkarray, bigstrut} %

\usepackage{nccmath}   
\usepackage{amsthm}    

\usepackage{caption}
\usepackage{subfig}
\usepackage{epsfig}
\usepackage{times}
\usepackage{float}
\usepackage{rotating}
\usepackage{makeidx}
\usepackage{url}
\usepackage{multirow}
\usepackage{booktabs}
\usepackage{tabularx}
\usepackage[toc,title,page]{appendix}

\usepackage[subfigure, titles]{tocloft}
\usepackage{acronym}
\usepackage{datetime}

\DeclareMathOperator*{\argmin}{arg\,min}

\usepackage{algorithm}
\usepackage{lipsum}
\usepackage{algpseudocode}

\newtheorem{theorem}{Theorem}[section]

\newtheorem{lemma}[theorem]{Lemma}

\newtheorem{definition}{Definition}

\journal{Journal of \LaTeX\ Templates}









\bibliographystyle{elsarticle-num}

\begin{document}

\begin{frontmatter}

\title{Multi-variant {COVID-19} model with heterogeneous transmission rates using deep neural networks}



\author[mymainaddress]{K.D. Olumoyin\corref{mycorrespondingauthor}}
\cortext[mycorrespondingauthor]{Corresponding author}
\ead{kayode.olumoyin@mtsu.edu}

\author[mymainaddress]{A.Q.M. Khaliq}

\author[mysecondaryaddress]{K.M. Furati}

\address[mymainaddress]{Department of Mathematical Sciences,
Middle Tennessee State University,  Murfreesboro, TN 37132, USA}
\address[mysecondaryaddress]{Department of Mathematics,
 King Fahd University of Petroleum and Minerals, Dhahran 31261, Saudi Arabia}

%
%
%

\begin{abstract}
Mutating variants of COVID-19 have been reported across many US states since 2021. In the fight against COVID-19, it has become imperative to study the heterogeneity in the time-varying transmission rates for each variant in the presence of pharmaceutical and non-pharmaceutical mitigation measures. We develop a Susceptible-Exposed-Infected-Recovered mathematical model to highlight the differences in the transmission of the B.1.617.2 delta variant and the original SARS-CoV-2. Theoretical results for the well-posedness of the model are discussed. A Deep neural network is utilized and a deep learning algorithm is developed to learn the time-varying heterogeneous transmission rates for each variant. The accuracy of the algorithm for the model is shown using error metrics in the data-driven simulation for COVID-19 variants in the US states of Florida, Alabama, Tennessee, and Missouri.  Short-term forecasting of daily cases is demonstrated using long short term memory neural network and an adaptive neuro-fuzzy inference system.
\end{abstract}

\begin{keyword}
deep neural network \sep data-driven simulation \sep heterogeneous transmission rates \sep COVID-19 \sep multi-variants 
\end{keyword}

\end{frontmatter}

\linenumbers

\section{Introduction}

COVID-19 was first reported in China in 2019~\citep{WHO2021}, it has since become a global pandemic. 
In recent months, there have been reports of mutating variants of the virus~\citep{Callaway2021}. 
In 2021, the dominant mutant variant of COVID-19 was the B.1.617.2 delta variant~\citep{CDC2021}.
Effort to combat the spread of COVID-19 have included combinations of pharmaceutical (vaccination and hospitalization) and non-pharmaceutical (social distancing, contact tracing, and facial mask) measures. 

Prior to the onset of COVID-19 mutating variants in the US, the progress seen in the data from several states prompted the ease of the various non-pharmaceutical measures. Amid the news that several states had vaccinated over $70\%$ of its population and a few states had vaccinated between $60\% - 70\%$ of its population, vaccination effort began to slow down in many US states.  
As a result, the existence of mutating variants resulted in a resurgence in cases of infections. 
The Center for Disease Control and Prevention (CDC) reported that the dominant variant in the US in 2021 was the B.1.617.2 delta variant. 
According to the World Health Organization (WHO), many variants were first reported in the United Kingdom and South Africa and in recent months, the USA, Europe, China, Brazil, and Japan have all reported mutating variant infected cases. 

We present a data-driven deep learning algorithm for a model consisting of time-varying transmission rates for each active variant. Using infected daily cases data, we learn the form of the time-varying transmission rates, to reveal a timeline of the impact of mitigation measures on the transmission of COVID-19~\citep{olumoyin2021,OlumoyinEINNarxiv}. It can also be demonstrated that this algorithm shows improvement on short-term forecasting when combined with a recurrent neural network and an adaptive neuro-fuzzy inference system.


Neural networks are universal approximators of continuous functions~\citep*{Cybenko1989, Hornik1991}.
Feedforward neural networks (FNN) have been used to learn approximate solutions of differential equations.
In~\citep{Raissi2019}, FNN was used to develop differential equation solvers and parameter estimators by constraining the residual.
This FNN is called the Physics Informed Neural Network (PINN).
PINN has been used to simulate pandemic spread, see~\citep{RaisiM2019}, where the model parameters were taken to be constants.
In~\citep{Long2020}, an algorithm that combines PINN with Long Short-term Memory (LSTM) is presented to solve an epidemiological model and identify weekly and daily time-varying parameters.

The paper is organized as follows. 
In Section \ref{model}, we introduce and discuss the multi-variant SEIR model and the time-varying transmission rate of each variant. The well posedness of the model is discussed in Section~\ref{well}. The neural network structure of the Epidemiology neural network EINN is presented in~\ref{EINN}. Data-driven simulation of COVID-19 data is shown in Sections~\ref{data1}. A comparison of a recurrent neural network based forecast and an adaptive neuro-fuzzy inference system based forecast is presented in~\ref{data2}. The performance error metrics of EINN is discussed in Section~\ref{metrics}. The paper is summarized in Section~\ref{conclusion}.


\section{Multi-variant SEIR model }
\label{model}

We assume that the total population $N(t)=N$ at any given time is distributed among the following compartments: 
susceptible $(S)$, exposed $(E)$, Infectious $(I_i)$, $i=1,\ldots,M$, and recovered $(R)$, where $M$ is the number of different variants. The interaction between the compartments is shown in Figure~\ref{trans_diag}.

\begin{figure}[htbp]
\centering
\includegraphics[width=14cm]{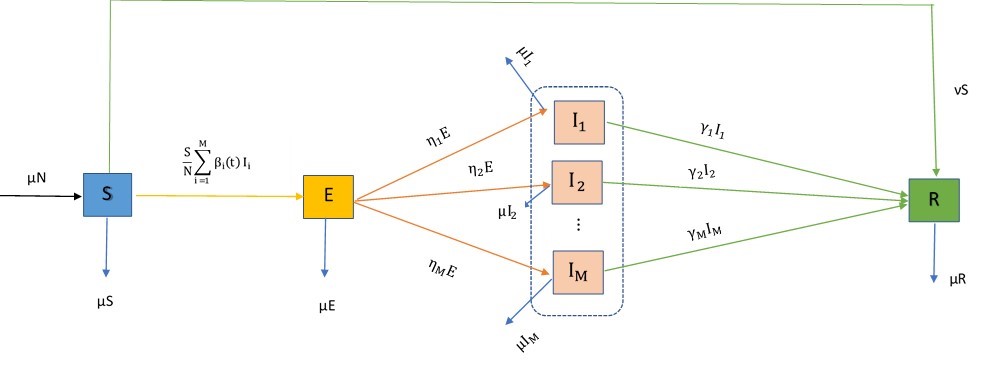}
\caption{Transfer diagram between the compartments} \label{trans_diag}
\end{figure}

As shown in Figure~\eqref{trans_diag}, the susceptible individuals enter the exposed compartment at the rate $\frac{1}{N}\sum_{i=1}^{M}\beta_i(t)I_i$, where $\beta_i(t)$ is the transmission rate of variant $i$. The exposed individuals progress to the $i$th infected compartment at the rate $\eta_i$. The $i$th infected compartment recover at the rate of $\gamma_i I_i$.

We assume a natural death rate $\mu$, given by $\mu = \frac{1}{\it{LFE} \times 365}$, where $\it{LFE}$ denotes the life expectancy. For simplicity, it is assumed that the birth rate of the population is  equal to the death rate. The parameter $\eta$ is the transmission rate from the exposed to the various infectious sub-compartments, $\gamma_i^{-1}$ is the mean symptomatic infectious period for the $i$th variant. The parameter $\nu(t)$ represent the time-dependent removal rate of the vaccinated individuals from the susceptible compartment. We assume that any variant does not super-infect another variant, so there are no interactions between the infectious sub-compartments.

Based on the transfer diagram depicted in Figure~\ref{trans_diag}, the mathematical model for a multi-variant COVID-19 pandemic with heterogeneous transmission rates is given by:

\begin{equation}\label{A-SIReq} 
\begin{split}
\frac{d S}{dt}   & = \mu N -\frac{S}{N} \sum_{i=1}^{M}\beta_i(t) I_i  -  \Big(\nu(t) + \mu\Big) S \\
\frac{d E}{dt}  & =  \frac{S}{N} \sum_{i=1}^{M}\beta_i(t) I_i  - (\eta + \mu) E \\
\frac{d I_i}{dt} & = \eta_i E - (\gamma_i + \mu) I_i, \hspace{5 MM} i=1,\ldots,M \\
\frac{d R}{dt} & =  \sum_{i=1}^{M}\gamma_i I_i + \nu(t) S - \mu R
\end{split}
\end{equation}

subject to non-negative initial conditions
\[ S(0) = S_0, \quad E(0) = E_0, \quad I_i(0) = I_{i,0}, i=i,\dots,M, \quad R(0) = R_0. \]

The parameter $\eta$ is defined as: $\eta = \sum_{i=1}^{M} \eta_i$, and the total population is 
\[ N(t) = S(t) + E(t) + \sum_{i=1}^{M} I_i(t) + R(t). \]

The differential equation satisfied by the total population size is obtained by adding all the equations in  \eqref{A-SIReq}, that is, $\frac{d N}{dt}=0$ and thus N is constant. The model parameters are summarized in Table~\ref{Table:10}.

Time-varying transmission rates have been shown to efficiently model the spread of COVID-19~\citep{olumoyin2021,Jagan2020}. Next, will discuss the form of the time-varying transmission rates for each variant.

\subsection{Variant-based time-varying transmission rates}
\label{time-vary}

Time-varying transmission rates  in~\eqref{A-SIReq} incorporates the impact of governmental actions, and the public response \cite{HeD2013}. We consider the transmission rates of the form

\begin{equation}\label{webb_eqn2}
\beta_i(t) = \beta_{i}^{0} \exp{(-\kappa_i t)},  \hspace{5 MM} 1 \leq i \leq M.
\end{equation}
where $\kappa_i$ in~\eqref{webb_eqn2}  is the infectiousness factor for each $i$ th variant. However, we define $(1+\tau_i)$ to be the factor by which a particular variant is more infectious than the original variant SARS-CoV-2. And so, the following relationship exist between each mutating variant and the SARS-CoV-2 variant~\eqref{webb_eqn3}.

\begin{equation}\label{webb_eqn3}
\beta_i(t) = \beta_1(t) (1 + \tau_i),  \hspace{5 MM} 2 \leq i \leq M.
\end{equation}

In~\eqref{webb_eqn3}, $ \beta_1(t)$ is the transmission rate for the original variant SARS-CoV-2.
The transmission rates of the subsequent mutating variants are given by  $\beta_i(t)$, $i=2,\dots,M$, where M represents the number of mutating COVID-19 variants. Although the publicly available data reports the daily infected cases, there were reports that suggest hat the dominant variant, B.1.617.2 delta variant, to be twice as infectious as the original variant SARS-CoV-2. According to CDC reports~\citep*{CDC2021,CDC2021_2}, the delta variant accounted for $1.3\%$ of total infected cases in May, 2021, $9.5\%$ in June, and in August it accounted for $93\%$ of the total infected cases. 

\begin{table}[H]
\begin{tabular}{lcccc}
\addlinespace
\toprule
{\bf Parameter} & {\bf Notation} & {\bf Range} & {\bf Remark} & {\bf Reference}\\
\toprule
{Baseline transmission rate for each $i$th variant} & {$\beta_{i}^{0}$} & {[0,1)} & {fitted using daily cases data} & {\cite{olumoyin2021}}  \\
{Emigration rate} & {$\mu$} & {$\frac{1}{\it{LFE} \times 365}$} & {constant}  & {\cite{Furati2021}}   \\ 
{Mean latent period} & ${\eta^{-1}}$ & {$2 - 14 (days)$} & {constant}  & {\cite{Furati2021}}   \\ 
{recovery rate for each $i$th variant} & ${\gamma_{i}}$ & {$[0,1)$} & {constant}  &  \\
{infectiousness factor for each $i$th variant} & ${\kappa_{i}}$ & {$[0,1)$} & {constant}  &  \\ 
\midrule
\end{tabular}
\caption{Summary table of parameters in model~\eqref{A-SIReq}}
\label{Table:10}
\end{table}

\subsection{Well-posedness of the model}
\label{well}

\begin{definition}[\cite{Schaeffer2016}, Locally Lipschitz continuity]
Let $d_1, d_2 \in \mathbf{N}$ and $\mathbf{S}$ be a subset of $\mathbf{R}^{d_1}$. A function $\mathbf{F}:\mathbf{S} \rightarrow \mathbf{R}^{d_2}$ is Lipschitz continuous on $\mathbf{S}$ if there exists a nonnegative constant $L\geq 0$ such that 

\begin{equation}\label{lipschitz}
|\mathbf{F}(x) - \mathbf{F}(y)| \leq L|x-y|, \quad x,y \in \mathbf{S}.
\end{equation}

\end{definition}

Let $\mathcal{U}$ be an open subset of $\mathbf{R}^{d_1}$, and let $\mathbf{F}:\mathcal{U} \rightarrow \mathbf{R}^{d_2}$. We shall call $\mathbf{F}$ locally Lipschitz continuous if for every point $x_0 \in \mathcal{U}$ there exists a neighborhood $V$ of $x_0$ such that the restriction of $\mathbf{F}$ to $V$ is Lipschitz continuous on $V$.

We consider a more general framework of model \eqref{A-SIReq} 
\begin{equation}\label{ivpp}
z^{\prime}(t)=G(z(t)), \hspace{5 MM} z(0)=z_0,
\end{equation}
 where $z(t)=(x_1(t),x_2(t),\ldots,x_n(t))^T$ and $G(z(t)) = (g_1(z(t)),g_2(z(t)),\ldots,g_n(z(t)))^T$, the initial condition $z_0 \in \mathbf{R}^n$. We state the following theorem.

\begin{theorem}[\cite{Schaeffer2016}]\label{sch}
If $\mathbf{G}:\mathbf{R}^{n} \rightarrow \mathbf{R}^{n}$ is locally Lipschitz continuous and if there exist nonnegative constants $B$, $K$ such that

\begin{equation}
|\mathbf{G}(z(t))|\leq K \hspace{1 MM}  |z(t)| + B, \quad z(t) \in \mathbf{R}^n,
\end{equation}

then the solution of the initial value problem \eqref{ivpp} exists for $t >0$, and 

\begin{equation}
|(z(t))|\leq |z_0|\cdot \exp{(K \hspace{1 MM} |t|)} + \frac{B}{K} \cdot (\exp{(K \cdot|t|)}-1), \quad t >0,
\end{equation}

\end{theorem}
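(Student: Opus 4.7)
\medskip

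The plan is to combine a standard local existence result (Picard--Lindel\"of) with an a priori estimate obtained via an integral form of Gr\"onwall's inequality, and then use a continuation argument to conclude that the solution persists for all $t > 0$. Because $\mathbf{G}$ is locally Lipschitz continuous on $\mathbf{R}^{n}$, Picard--Lindel\"of applied near the initial condition $z_0$ guarantees a unique solution on a maximal interval $[0,T_{\max})$ with $T_{\max} \in (0,\infty]$. The goal is therefore to rule out $T_{\max} < \infty$ by showing $|z(t)|$ cannot blow up in finite time.

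For the a priori bound, I would rewrite the IVP \eqref{ivpp} in integral form $z(t) = z_0 + \int_0^t G(z(s))\,ds$, take norms, and apply the linear growth hypothesis to get
\begin{equation*}
|z(t)| \;\leq\; |z_0| \;+\; \int_0^t \bigl(K\,|z(s)| + B\bigr)\,ds \;=\; \bigl(|z_0| + Bt\bigr) \;+\; K\int_0^t |z(s)|\,ds.
\end{equation*}
The integral form of Gr\"onwall's inequality then yields
\begin{equation*}
|z(t)| \;\leq\; |z_0|\,\exp(Kt) \;+\; \frac{B}{K}\bigl(\exp(Kt)-1\bigr),
\end{equation*}
which is precisely the stated bound for $t>0$; the case of general sign of $t$ is handled identically after replacing $t$ by $|t|$.

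Finally, I would close the argument by a standard maximal-interval continuation step: if $T_{\max}<\infty$, then $|z(t)|$ must be unbounded as $t \uparrow T_{\max}$, but the Gr\"onwall estimate shows $|z(t)|$ is bounded by a continuous function of $t$ on $[0,T_{\max}]$, a contradiction. Hence $T_{\max}=\infty$ and the explicit bound holds on all of $t>0$. The main obstacle, in my view, is not any individual step (each is textbook) but selecting the cleanest version of Gr\"onwall's inequality---specifically the nonhomogeneous version handling the additive constant $B$---so that the resulting bound matches the stated form without extra estimation; a secondary subtlety is justifying that the local Lipschitz hypothesis alone suffices for uniqueness and continuation, which it does on any compact sublevel set of $|z|$ because local Lipschitz functions are bounded and Lipschitz on compacts.
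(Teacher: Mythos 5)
Your argument is correct: it is the standard proof via Picard--Lindel\"of local existence, the integral (nonhomogeneous) Gr\"onwall estimate, and the maximal-interval continuation criterion, and the Gr\"onwall computation does reproduce the stated bound $|z_0|e^{Kt}+\frac{B}{K}(e^{Kt}-1)$ exactly (e.g.\ by applying Gr\"onwall to $|z(t)|+\frac{B}{K}$). Note, however, that the paper itself offers no proof of this statement --- it is quoted as Theorem~\ref{sch} from the cited reference \cite{Schaeffer2016} and only used later in the proof of Theorem~\ref{exist} --- so there is nothing to compare against; the only cosmetic caveat is that your bound (like the stated one) implicitly requires $K>0$, the case $K=0$ giving instead $|z(t)|\leq|z_0|+Bt$.
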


\begin{lemma}\label{bdd}
For each $i$th variant, $i \in \{1, \ldots, M\}$, the time varying transmission rates $\nu,\beta_i: [0,\infty) \rightarrow [0,\infty)$ are Lipschitz continuous and contnuously differentiable. There exists $\beta_{min},\beta_{max}>0$ and $\nu_{min} \nu_{max}>0$ such that $\beta_{min} \leq \beta_i(t) \leq \beta_{max}$, $\nu_{min}\leq \nu(t) \leq \nu_{max}$ for all $t$.
\end{lemma}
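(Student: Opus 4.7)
The plan is to use the explicit functional form $\beta_i(t)=\beta_i^{0}\exp(-\kappa_i t)$ given in \eqref{webb_eqn2}, together with the parameter ranges $\beta_i^{0}\in[0,1)$ and $\kappa_i\in[0,1)$ recorded in Table~\ref{Table:10}. Continuous differentiability is immediate: a composition of a polynomial and the exponential, so $\beta_i'(t)=-\kappa_i\beta_i^{0}\exp(-\kappa_i t)$ is continuous on $[0,\infty)$. Lipschitz continuity then follows from the mean value theorem applied on any interval: for $s,t\geq 0$,
\begin{equation*}
|\beta_i(t)-\beta_i(s)| \leq \sup_{u\geq 0}|\beta_i'(u)|\,|t-s| \leq \kappa_i\beta_i^{0}\,|t-s|,
\end{equation*}
so $L=\kappa_i\beta_i^{0}$ is a valid Lipschitz constant.

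For the two-sided bounds, the upper bound is obtained at $t=0$: because the exponential is monotonically decreasing in $t$, $\beta_i(t)\leq \beta_i(0)=\beta_i^{0}$, and one can take $\beta_{max}=\max_{1\leq i\leq M}\beta_i^{0}<1$. The positive lower bound is the delicate step, since $\beta_i(t)\to 0$ as $t\to\infty$. I would therefore restrict attention to the finite simulation horizon $[0,T]$ that is implicit throughout the modeling and data-fitting section, and set
\begin{equation*}
\beta_{min}\;=\;\min_{1\leq i\leq M}\beta_i^{0}\exp(-\kappa_i T),
\end{equation*}
which is strictly positive because $\beta_i^{0}>0$ in the estimated model and $\kappa_i T<\infty$. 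This is the only point at which the lemma, read literally for all $t\geq 0$, has to be tightened to a finite window.

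For $\nu(t)$, since no explicit formula is given in the excerpt, I would proceed from the modeling description: $\nu(t)$ is the time-dependent vaccination removal rate, which one assumes (or constructs from the vaccination data) to be a nonnegative, continuously differentiable, bounded function. Then the same mean-value-theorem argument gives Lipschitz continuity, and $\nu_{min}$, $\nu_{max}$ can be taken as the infimum and supremum of $\nu$ over the simulation horizon $[0,T]$, both of which are positive whenever the vaccination programme is active throughout that window.

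The main obstacle is thus not the regularity or upper bound, but the positive lower bound for $\beta_i$, because the exponential-decay parametrization conflicts with a uniform lower bound on an unbounded time interval; the natural fix, consistent with the role the lemma plays in the well-posedness argument of Theorem~\ref{sch}, is to work on the finite horizon on which the data-driven problem is posed.
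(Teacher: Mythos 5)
Your attempt is sound as far as it goes, but note that the paper does not actually prove Lemma~\ref{bdd} at all: it is stated as a standing hypothesis on the time-varying rates and invoked in the proof of Theorem~\ref{exist} only through the bounds $\beta_{max}$ and $\nu_{max}$. So your route is genuinely different — you try to \emph{derive} the lemma from the explicit parametrization $\beta_i(t)=\beta_i^{0}\exp(-\kappa_i t)$ of \eqref{webb_eqn2} and the parameter ranges in Table~\ref{Table:10}. For the regularity and the upper bound your argument is correct and adds content the paper leaves implicit: $\beta_i'$ is bounded by $\kappa_i\beta_i^{0}$, which gives the Lipschitz constant, and monotonicity gives $\beta_{max}=\max_i\beta_i^{0}$. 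You are also right to flag that the uniform positive lower bound is incompatible with the exponential form on $[0,\infty)$ whenever $\kappa_i>0$ (and, one can add, $\nu_{min}>0$ likewise fails, since the loss function \eqref{loss_func} forces $\nu(t)\approx 0$ before the vaccination start date $T_{\nu}$); restricting to the finite data horizon $[0,T]$, or simply dropping the lower bounds, is the honest repair, and it is harmless because Theorem~\ref{exist} uses only the upper bounds when verifying the linear-growth hypothesis of Theorem~\ref{sch}. The one place where your proposal is weaker than it appears is $\nu(t)$: since no formula for $\nu$ is given, your argument there is an assumption of the conclusion rather than a proof — which is in effect what the paper itself does for both $\nu$ and $\beta_i$. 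In short: the paper treats the lemma as an axiom; you give a partial verification from the model's own assumptions, at the price of correctly weakening the lower-bound claim to a finite time window.
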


\begin{theorem}\label{exist}
The nonlinear first order system of differential equations \eqref{A-SIReq} has at least one solution which exists for $t \in [0, \infty)$.
\end{theorem}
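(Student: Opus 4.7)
The plan is to verify the two hypotheses of Theorem~\ref{sch} for system \eqref{A-SIReq}. I would first recast \eqref{A-SIReq} in the form \eqref{ivpp} by setting $z(t) = (S, E, I_1, \ldots, I_M, R)^T \in \mathbf{R}^{M+3}$ and reading off $G = (G_S, G_E, G_{I_1}, \ldots, G_{I_M}, G_R)^T$ from the right-hand sides. The coefficients involved are either the constants $\mu, \eta_i, \gamma_i$ of Table~\ref{Table:10}, or the rates $\beta_i(t), \nu(t)$ which Lemma~\ref{bdd} guarantees to be continuous and bounded above by $\beta_{\max}$ and $\nu_{\max}$, respectively.

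Local Lipschitz continuity is the easier hypothesis: each component of $G$ is a polynomial of degree at most two in the coordinates of $z$ with bounded coefficients, so its partial derivatives are bounded on any compact neighborhood and the mean value theorem supplies a Lipschitz constant there. The harder step, and the main obstacle, is the linear growth estimate $|G(z)| \le K|z| + B$: the incidence term $\tfrac{S}{N}\sum_{i=1}^{M} \beta_i(t) I_i$ is bilinear in $z$ and therefore is \emph{not} dominated by $K|z|+B$ on the whole of $\mathbf{R}^{M+3}$. To overcome this I would use the conservation $\tfrac{dN}{dt} = 0$ derived just after \eqref{A-SIReq} together with nonnegativity to show the simplex
\begin{equation*}
\Omega = \Bigl\{ z \in \mathbf{R}_{\ge 0}^{M+3} : S + E + \sum_{i=1}^{M} I_i + R = N \Bigr\}
\end{equation*}
is positively invariant, and then extend $G$ to a globally tame vector field $\widetilde{G}$ by clipping $S$ and each $I_i$ appearing in the nonlinear term to the interval $[0, N]$. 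On $\Omega$ the clip is inactive and $\widetilde{G} \equiv G$; on all of $\mathbf{R}^{M+3}$ the clipped incidence term is bounded by $\beta_{\max}\sum_{i} I_i \le \beta_{\max}|z|$, while the remaining linear contributions absorb $\mu N$ into $B$ and $\mu + \eta + \max_i \gamma_i + \nu_{\max} + \beta_{\max}$ (up to a dimensional factor) into $K$. Local Lipschitz continuity of $\widetilde{G}$ is preserved because the clipping maps are themselves Lipschitz.

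With $\widetilde{G}$ satisfying both hypotheses, Theorem~\ref{sch} yields a solution $\widetilde{z}$ of $\widetilde{z}' = \widetilde{G}(\widetilde{z})$ existing for every $t > 0$. A boundary-face inspection of $\Omega$ (on $S=0$ one has $G_S = \mu N \ge 0$; on $E=0$ one has $G_E \ge 0$; similarly $G_{I_i} \ge 0$ on $I_i = 0$ and $G_R \ge 0$ on $R=0$; and the identity $\tfrac{dN}{dt} \equiv 0$ preserves the sum constraint) confirms $\Omega$ is forward invariant, so any nonnegative initial datum $z_0 \in \Omega$ yields a trajectory $\widetilde{z}(t) \in \Omega$ along which the truncation never activates. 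Hence $\widetilde{z}$ solves the original system \eqref{A-SIReq} on $[0, \infty)$, which is the assertion of Theorem~\ref{exist}.
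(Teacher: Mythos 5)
Your proposal is correct and reaches the conclusion by the same overall strategy as the paper --- verify the hypotheses of Theorem~\ref{sch} using the bounds $\beta_{\max}$, $\nu_{\max}$ from Lemma~\ref{bdd} --- but it differs in how the quadratic incidence term is handled, and this is the substantive point. The paper estimates $\|G(z)\|$ directly in the supremum norm and passes from $\beta_{\max}\bigl|\tfrac{S}{N}\sum_{i=1}^{M}I_i\bigr|$ to $\beta_{\max}|S|$, a step that silently uses $\sum_{i}I_i \le N$; that inequality holds only on the biologically feasible region, and the paper never establishes that trajectories remain there, so its linear-growth bound $\|G(z)\|\le K\|z\|$ is not valid on all of $\mathbf{R}^{M+3}$ as Theorem~\ref{sch} formally requires. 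You identify exactly this obstacle and repair it: you prove forward invariance of the simplex $\Omega$ by the standard boundary-face sign check together with $\tfrac{dN}{dt}=0$, truncate the incidence term so the modified field $\widetilde{G}$ is globally of linear growth and still locally Lipschitz, apply Theorem~\ref{sch} to $\widetilde{G}$, and observe that on $\Omega$ the truncation is inactive so the global solution of the truncated system solves \eqref{A-SIReq}. What the paper's route buys is brevity; what yours buys is a genuinely complete argument, since the invariance of $\Omega$ is precisely the missing justification for the paper's key inequality. Two minor remarks: since both $S$ and the $I_i$ are clipped to $[0,N]$, the truncated incidence term is in fact bounded by the constant $\beta_{\max}MN$ and can simply be absorbed into $B$, which is even cleaner than the bound $\beta_{\max}|z|$ you state; and, like the paper, you apply Theorem~\ref{sch} (stated for autonomous $G(z)$) to a field depending on $t$ through $\beta_i(t)$ and $\nu(t)$ --- the regularity in Lemma~\ref{bdd} makes this harmless (e.g.\ by augmenting the state with $t$), but it is worth a sentence.
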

\begin{proof}
Let $z(t)=(S(t),E(t),I_1(t),\ldots,I_M(t),R(t))^T$, we can set 

\begin{equation}\label{GG}
G: \mathbf{R}^{M+3} \rightarrow \mathbf{R}^{M+3},
\hspace{10MM}
z(t) \rightarrow \begin{pmatrix}
\mu N -\frac{S}{N} \sum_{i=1}^{M}\beta_i(t)I_i  -  \Big(\nu(t) + \mu\Big) S \\
\frac{S}{N} \sum_{i=1}^{M}\beta_i(t)I_i  - (\eta + \mu) E \\
\eta_i E - (\gamma_i + \mu) I_i, \hspace{5 MM} i=1,\ldots,M \\
\sum_{i=1}^{M}\gamma_i I_i + \nu(t) S - \mu R 
\end{pmatrix}
\end{equation}

$G$ is locally Lipschitz continuous, using supremum norm $||f(t)|| := \sup \limits_{t \in [a,b]}|f(t)|$, we have
\begin{equation*}
\begin{split}
||G(z(t))|| &=  \sup \limits_{t \in [0,\infty)} \Bigg\{\Bigg|\mu N -\frac{S(t)}{N} \sum_{i=1}^{M}\beta_i(t)I_i(t)   -  \Big(\nu(t) + \mu\Big) S(t) \Bigg|, \Bigg|\frac{S(t)}{N} \sum_{i=1}^{M}\beta_i(t)I_i - (\eta + \mu) E(t) \Bigg|, \\
& \qquad \Big| \eta_i E(t) - (\gamma_i + \mu) I_i(t), \hspace{5 MM} i=1,\ldots,M \Big|,
\Big| \sum_{i=1}^{M}\gamma_i I_i(t) + \nu(t) S(t) - \mu R(t) \Big|\Bigg\}\\
& \leq  \sup \limits_{t \in [0,\infty)}\Bigg\{\mu N +  \beta_{max}\Bigg|\frac{S(t)}{N} \sum_{i=1}^{M}I_i(t)\Bigg|   +  (\nu_{max}+\mu)| S(t) |, \beta_{max}\Bigg|\frac{S(t)}{N} \sum_{i=1}^{M}I_i(t)\Bigg| +  (\eta + \mu)| E(t) |, \\
& \qquad  \eta_i |E(t)|  + (\gamma_i + \mu) |I_i(t)|,  \gamma_i \sum_{i=1}^{M}| I_i(t)| + \nu_{max} |S(t)| + \mu |R(t)| \Bigg\}\\
& \leq  \sup \limits_{t \in [0,\infty)}\Big\{\mu N +  \beta_{max}|S(t)|+(\nu_{max}+\mu)| S(t) |, \beta_{max}|S(t)|+  (\eta + \mu)| E(t) |, \\ 
& \qquad  \eta_i |E(t)|  + (\gamma_i + \mu) |I_i(t)|, \gamma_i \sum_{i=1}^{M} |I_i(t)| + \nu_{max} |S(t)| + \mu |R(t)| \Big\}\\
& \leq K ||z(t)||.
\end{split}
\end{equation*} 
So by Theorem~\ref{sch}, and the boundedness of the time-varying nonlinear functions from Lemma~\ref{bdd}, the nonlinear initial value problem~\ref{A-SIReq} has a solution for all time.
\end{proof}

\subsection{Basic reproduction number and equilibria stability}

The basic reproduction number $\mathcal{R}_0$ is the expected number of secondary infections that a single infectious individual will generate on average within a susceptible population. 

\begin{definition}\label{dfe}
The disease-free equilibrium of \eqref{A-SIReq} is given by 
\[(S^{\ast}, E^{\ast}, I_1^{\ast}, \ldots, I_M^{\ast}, R^{\ast}) = (S_0, 0, 0, \ldots, 0, 0).\]
\end{definition}
The basic reproduction number $\mathcal{R}_0$  is calculated for the case when $\beta_i(t)=\beta_{i}^{0}$, $i \in \{1, \ldots, M\}$.
Applying the next-generation operator approach~\citep*{Driessche2002}, the reproduction number $\mathcal{R}_0$ is obtained as the spectral radius of the next generation matrix $FV^{-1}$, where
\[
F = \begin{pmatrix}
 0 & \beta_{1}^{0} & \ldots & \beta_{M}^{0} \\
0 & 0 & \ldots & 0\\
\vdots & \vdots & \vdots & \vdots\\
0 & 0 & \ldots & 0
\end{pmatrix},
\hspace{5 MM}
V = \begin{pmatrix}
 \eta + \mu & 0 & 0 & \ldots & 0\\
-\eta_1 & \gamma_1 + \mu & 0 & \ldots & 0\\
-\eta_2 & 0 & \gamma_2 + \mu &  \ldots & 0\\
\vdots & \vdots & \vdots & \vdots & \vdots\\
-\eta_M & 0 & 0 & \ldots & \gamma_M + \mu
\end{pmatrix}.
\]

The basic reproduction number $\mathcal{R}_0$ is computed as follows in ~\eqref{chap4_r0}

\begin{equation}\label{chap4_r0}
\mathcal{R}_0 =  \sum_{i=1}^{M}\frac{\beta_{i}^{0}\eta_i}{ \left(\gamma_i + \mu \right)\left(\eta + \mu \right)}.
\end{equation}

Next, we analyze the local asymptotic stability of the disease-free equilibrium in Definition~\ref{dfe}.

\begin{theorem}\label{thm1}
The disease-free equilibrium $(S^{\ast}, E^{\ast}, I_1^{\ast}, \ldots, I_M^{\ast}, R^{\ast})$ of \eqref{A-SIReq} is locally asymptotically stable if $\mathcal{R}_0 < 1$. 
\end{theorem}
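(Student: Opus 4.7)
My plan is to invoke the next-generation framework of van den Driessche and Watmough~\cite{Driessche2002}, which is the same reference already used to derive the formula~\eqref{chap4_r0}. The key observation is that once the matrices $F$ and $V$ have been constructed as above, local asymptotic stability of the DFE is equivalent to the eigenvalues of $F-V$ having negative real parts on the infected subsystem, together with stability of the disease-free subsystem.

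I would proceed in three steps. First, freeze the time-varying coefficients $\nu(t)$ and $\beta_i(t)$ at their equilibrium values $\nu$ and $\beta_i^0$, as is implicit in the sentence immediately preceding \eqref{chap4_r0}. Second, compute the Jacobian of \eqref{A-SIReq} at $(S_0,0,\ldots,0,0)$ and reorder the state as $(E,I_1,\ldots,I_M,S,R)$; in these coordinates the Jacobian is block lower triangular, with upper $(M{+}1)\times(M{+}1)$ block equal to exactly $F-V$ and lower $2\times 2$ block diagonal with entries $-(\nu+\mu)$ and $-\mu$. Third, verify the hypotheses of Theorem~2 in~\cite{Driessche2002}: $F$ is entry-wise non-negative, and $V$ is lower triangular with strictly positive diagonal $(\eta+\mu,\gamma_1+\mu,\ldots,\gamma_M+\mu)$ and non-positive off-diagonal entries, so $V$ is a non-singular M-matrix. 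That theorem then yields the implication $\mathcal{R}_0<1 \Rightarrow$ every eigenvalue of $F-V$ has negative real part. Combined with the two manifestly negative eigenvalues coming from the lower block, every eigenvalue of the full Jacobian has negative real part, so local asymptotic stability of the DFE follows by the Hartman--Grobman theorem.

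The main obstacle I foresee is cleanly handling the time-dependence of $\nu(t)$ and $\beta_i(t)$, since a genuine equilibrium only exists for the frozen-coefficient autonomous system. I would address this with one sentence at the start of the proof, stating that Definition~\ref{dfe} and Theorem~\ref{thm1} are understood in the autonomous sense already used for the $\mathcal{R}_0$ calculation, consistent with Lemma~\ref{bdd} which guarantees the relevant bounds. A self-contained alternative would be to expand $\det(F-V-\lambda I)$ along its first row and show, using the sign pattern of the coefficients together with the inequality $\sum_i \beta_i^0 \eta_i / \bigl[(\gamma_i+\mu)(\eta+\mu)\bigr] < 1$, that no root lies in the closed right half-plane; this is feasible but noticeably more computational and less aligned with the reference already in the paper, so I would prefer the next-generation route.
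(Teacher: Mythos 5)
Your proposal is correct, but it follows a genuinely different route from the paper. The paper's proof writes down the full Jacobian of \eqref{A-SIReq} at the disease-free equilibrium and then, for $M=1$, computes the four eigenvalues explicitly, $\lambda_1=-\mu$, $\lambda_2=-(\nu+\mu)$, $\lambda_{3,4}=\tfrac12(\mp A-B)$ with $A=\sqrt{4\beta_1^0\eta_1+(\eta_1-\gamma_1)^2}$ and $B=\eta_1+2\mu+\gamma_1$, asserting $A<B$ (this is precisely where $\mathcal{R}_0<1$ enters, since $B^2-A^2=4\left[(\eta_1+\mu)(\gamma_1+\mu)-\beta_1^0\eta_1\right]$, although the paper does not say so) and then stating that the case $M\geq 2$ is ``similar.'' You instead exploit the block lower-triangular structure of the Jacobian in the ordering $(E,I_1,\ldots,I_M,S,R)$ and invoke Theorem~2 of \cite{Driessche2002}: $F\geq 0$, $V$ a nonsingular M-matrix, so $\mathcal{R}_0=\rho(FV^{-1})<1$ forces the infected block $F-V$ to be stable, and the remaining block contributes the eigenvalues $-(\nu+\mu)$ and $-\mu$. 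Your route buys uniform treatment of arbitrary $M$ and makes the role of $\mathcal{R}_0<1$ explicit, whereas the paper's computation is elementary and self-contained but is only fully carried out for $M=1$. Two small points: the lower $2\times 2$ block in your ordering is lower triangular (it contains the entry $\nu$ from $\partial\dot R/\partial S$), not diagonal, though its eigenvalues are as you state; and the final step is the standard linearization (Lyapunov) theorem rather than Hartman--Grobman, though the latter also suffices for a hyperbolic equilibrium. Your framing of the frozen-coefficient issue for $\nu(t)$, $\beta_i(t)$ is consistent with the paper, which likewise computes $\mathcal{R}_0$ only for $\beta_i(t)\equiv\beta_i^0$.
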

\begin{proof}
The Jacobian of the right hand side of \eqref{A-SIReq} at the equilibrium point is given by
\[
J = \begin{pmatrix}
 -(\nu+\mu) & 0 & - \beta_{1}^{0} & - \beta_{2}^{0} & \ldots & - \beta_{M}^{0} & 0\\
 0 & -(\eta +\mu) &  \beta_{1}^{0} &  \beta_{2}^{0} & \ldots &  \beta_{M}^{0} & 0\\
 0 & \eta_1 & -(\gamma_1 + \mu) & 0 & \ldots & 0 & 0\\
 0 & \eta_2 & 0 & -(\gamma_2 + \mu) & \ldots & 0 & 0\\
 \vdots & \vdots & \vdots & \vdots & \vdots & \vdots & \vdots\\
 0 & \eta_M & 0 & 0 & \ldots & -(\gamma_M + \mu) & 0\\
 \nu & 0 & \gamma_1 & \gamma_2 & \ldots & \gamma_M & -\mu
\end{pmatrix}
\]
If $M=1$, the eigenvalues of the Jacobian matrix are given as follows:
\[
\lambda_1 =  -\mu,
\lambda_2 = -(\nu+\mu),
\lambda_3 = \frac{1}{2} \Big(-A - B \Big),
\lambda_4 = \frac{1}{2} \Big(A - B \Big).
\]
where
\begin{equation*}
\begin{split}
A &=  \sqrt{4\beta_1^0\eta_1 + (\eta_1 - \gamma_1)^2} ,\\
B &= \eta_1 + 2\mu + \gamma_1.
\end{split}
\end{equation*}
Clearly, $A,B>0$ and $A<B$, so that $\lambda_1,\lambda_2,\lambda_3,\lambda_4 < 0$.
Similarly, we can show negative eigenvalues for $M\geq2$. So the disease-free equilibrium is locally asymptotically stable. 
\end{proof}

\section{Epidemiology Informed Neural Network (EINN)} 
\label{EINN}

A Feedforward Neural Network (FNN) composed of $L$ layers, $t$ inputs and an output $\mathcal{N}$ can be represented as the following function 
\begin{equation}
\mathcal{N}(t;\Sigma) = \sigma(W_{L}\sigma(\ldots \sigma(W_{2} \sigma(W_{1}t + b_1)+b_2)\ldots)+b_L),
\end{equation}
\noindent where $\Sigma:=(W_1, \ldots, W_L,b_1, \ldots, b_L)$. The neural network weight matrices are $W_l$, $l = 1, \ldots, L$, while the bias vectors are  $b_l$, $l = 1, \ldots, L$. Here, $\sigma$ is the activation function. Given a collection of sample pairs $(t_k, u_k)$, $k = 1,\dots K$, where $u$ is some target function, the goal is to find $\Sigma^{*}$ by solving the following optimization problem
\begin{equation}\label{orig_lossFunc}
\Sigma^{*} = \argmin\limits\frac{1}{K}\sum_{k=1}^{K}||\mathcal{N}(t_k;\Sigma) - u_k||_2^2.
\end{equation}
\noindent The function $\frac{1}{K}\sum_{k=1}^{K}||\mathcal{N}(t_k;\Sigma) - u_k||_2^2$ on the right hand side of \eqref{orig_lossFunc} is called the mean squared error (MSE) loss function.
A major task in training a network is to determine the suitable number of layers and the number of neurons per layer needed, the choice of activation function, and an appropriate optimizer for the loss function \cite{Goodfellow2016}.

\begin{figure}[htbp]
\centering
\includegraphics[width=15cm]{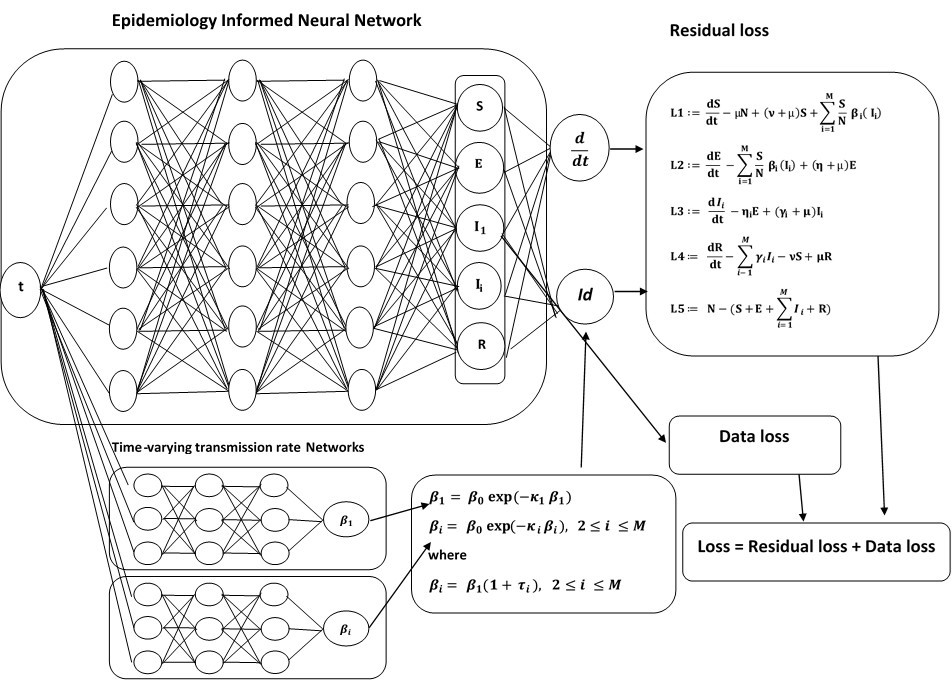}
\caption{Schematic diagram of the Epidemiology Informed Neural Network with nonlinear time-varying transmission rate.  \label{Sch_A_SEIR}}
\end{figure}

EINN is a form of Feedforward Neural Network that includes the known epidemiology dynamics in its loss function.
EINN is adapted for the SEIR model \eqref{A-SIReq}, where the Mean Square Error (MSE) of this neural network's loss function includes the known epidemiology dynamics such as a lockdown, while other mitigation measures such as social distancing, and contact tracing are detected by the time-varying transmission rate.
The output of EINN are the learned solutions to the SEIR model \eqref{A-SIReq} denoted by $S(t_j; \psi; \rho)$, $E(t_j; \psi; \rho)$, $I(t_j; \psi; \rho)$, $R(t_j; \psi; \rho)$, $j=1,\ldots,T$. 
Here, $\psi$ represent the neural network weights and biases while $\rho$ represent the epidemiology parameters and $T$ is the number of days in our dataset.
Next, we set-up time-varying transmission rate networks whose outputs are $\beta_i(t_j;\pi_i;\kappa_i)$, $j=1,\ldots,T$, for $i=1,\dots,M$. Each $\pi_i$ represent the weights and biases of each $i$th network and $\kappa_i$ is the infectiousness factor for each $i$th variant.
The training data is generated using cubicspline and denoted by $\tilde{I}(t_j)$ and $\tilde{V}(t_j)$, $j=T_{\nu},\ldots,T$. 
Here $T_{\nu}$ is an integer that correspond to the vaccination start date in the dataset. The B.1.617.2 delta variant was first reported in the USA in May, $T_{\delta}$ is an integer that correspond to May $4th$, 2021.
We observe that training data is not available for all the compartments in the SEIR model, however, EINN is able to capture the epidemiology interactions between the compartments because the residual of equation~\eqref{A-SIReq} is included in the MSE loss function.

The MSE loss function for EINN is given by,

\begin{equation}\label{loss_func}
\begin{split}
MSE &= \frac{1}{T_{\delta}}\sum_{j=1}^{T_{\delta}}\Big|I_1(t_j; \psi; \rho) - \tilde{I}(t_j)\Big|^2 
+ \frac{1}{|T-T_{\delta}|}\sum_{j=T_{\delta}}^{T}\Big|\sum_{i=1}^{M}I_i(t_j; \psi; \rho) - \tilde{I}(t_j)\Big|^2 \\
& +  \frac{1}{T_{\nu}}\sum_{j=1}^{T_{\nu}}\Big|\nu(t_j;\psi; \rho) - 0\Big|^2
+ \frac{1}{|T-T_{\nu}|}\sum_{j=T_{\nu}}^{T}\Big|\nu(t_j;\psi; \rho) S(t_j;\psi; \rho) - \tilde{V}(t_j)\Big|^2\\
& + \sum_{i=2}^{M}\Big|I_i(t_{\delta_{1}}; \psi; \rho) - p_{\delta_{1}} \tilde{I}(t_{\delta_{1}})\Big|^2
+ \sum_{i=2}^{M}\Big|I_i(t_{\delta_{2}}; \psi; \rho) - p_{\delta_{2}} \tilde{I}(t_{\delta_{2}})\Big|^2\\
& + \frac{1}{T_{\delta}}\sum_{i=2}^{M}\sum_{j=1}^{T_{\delta}}\Big|\beta_i(t_j; \pi_i;\kappa_i) - 0\Big|^2\\
& +  \frac{1}{|T-T_{\delta}|}\sum_{i=2}^{M}\sum_{j=T_{\delta}}^{T}\Big|\beta_1(t_j; \pi_1;\kappa_1)(1 + \tau_i) - \beta_i(t_j; \pi_i;\kappa_i)\Big|\\
& + \sum_{l=1}^5 L_l,
\end{split}
\end{equation}
where the residual $L_l$, $l=1,\ldots 5$, is as follows
\begin{equation}\label{resid_loss}
\begin{split}
L_1 &=\frac{1}{T}\sum_{j=1}^{T}\Big|\frac{d S(t_j; \psi;\rho)}{d t_j} + \frac{S(t_j; \psi;\rho)}{N}\Big(\sum_{i=1}^{M} \beta_i(t_j; \pi_i; \kappa_i) I_i(t_j; \psi;\rho) \Big) + \Big(\nu(t_j;\psi; \rho) + \mu\Big) S(t_j; \psi;\rho) - \mu N\Big|^2 \\
L_2 &=\frac{1}{T}\sum_{j=1}^{T}\Big|\frac{d E(t_j; \psi;\rho)}{d t_j} - \frac{S(t_j; \psi;\rho)}{N}\Big(\sum_{i=1}^{M} \beta_i(t_j; \pi_i; \kappa_i) I_i(t_j; \psi;\rho) \Big) + (\eta+\mu) E(t_j; \psi;\rho)\Big|^2\\
L_3 &= \frac{1}{T}\sum_{i=1}^{M}\sum_{j=1}^{T}\Big|\frac{d I_i(t_j; \psi;\rho)}{d t_j} - \eta_i E(t_j; \psi;\rho) + (\gamma_i + \mu) I_i(t_j; \psi;\rho)\Big|^2\\
L_4 &=\frac{1}{T}\sum_{j=1}^{T}\Big|\frac{d R(t_j; \psi;\rho)}{d t_j}   - \sum_{i=1}^{M}\gamma_i I_i(t_j; \psi;\rho) - \nu(t_j;\psi; \rho) S(t_j; \psi;\rho) + \mu R(t_j;\psi;\rho)\Big|^2 \\
L_5 &=  \frac{1}{T}\sum_{j=1}^{T}\Big|N - (S(t_j; \psi;\rho) + E(t_j; \psi;\rho) + \sum_{i=1}^{M}I_i(t_j; \psi;\rho)  + R(t_j; \psi;\rho))\Big|^2.
\end{split}
\end{equation}
where $\eta = \sum_{i=1}^{M}\eta_i$,  $i=1,\ldots M$.

The daily infected cases, the vaccinated cases, the known COVID-19 variants facts and the transmission rates are enforced in the mean square error (MSE)~\eqref{loss_func}, see Figure~\eqref{Sch_A_SEIR}. For instance, $p_{\delta_{1}}$ and $p_{\delta_{2}}$ correspond to the proportion of daily cases that was due to the mutating variants as reported by the CDC~\citep{CDC2021_2}.  


\begin{algorithm}[H]
\begin{algorithmic}[1]
\footnotesize

\State {Construct EINN}
 
{specify the input: $t_j$, $j=1,\ldots,T$}

{Initialize EINN parameter: $\psi$}

{Initialize the mathematical model parameters: $\rho=[\gamma_i]$, $i=1,\ldots,M$.}

{Output layer: $S(t_j; \psi; \rho)$, $E(t_j; \psi; \rho)$, $I(t_j; \psi; \rho)$, $R(t_j; \psi; \rho)$, $j=1,\ldots,T$}

\State{construct neural networks: $\beta_i$, $j=1,\ldots,M$}

{specify the input: $t_j$, $j=1,\ldots,T$}

{Initialize the neural network parameter: $\phi$}

{Specify $\beta_i^0$ obtained by fitting daily cases}

{Output layers : $\beta_i(t_j;\pi_i;\kappa_i)$  

\begin{equation}
\beta_i(t_j;\pi_i;\kappa_i) = (1+\tau_i)\beta_1(t_j;\pi_1;\kappa_1),  \hspace{5 MM} i\geq 2
\end{equation}}

\State{Specify EINN training set }

{Training data: using cubicspline, generate $\tilde{I}(t_j)$ and $\tilde{R}(t_j)$, $j=1,\ldots,T$.}


\State{Train the neural networks}

{Specify an $MSE$ loss function:
\begin{equation}\label{loss_func_beta_2}
 \begin{split}
MSE &= \frac{1}{T_{\delta}}\sum_{j=1}^{T_{\delta}}\Big|I_1(t_j; \psi; \rho) - \tilde{I}(t_j)\Big|^2 
+ \frac{1}{|T-T_{\delta}|}\sum_{j=T_{\delta}}^{T}\Big|\sum_{i=1}^{M}I_i(t_j; \psi; \rho) - \tilde{I}(t_j)\Big|^2 \\
& +  \frac{1}{T_{\nu}}\sum_{j=1}^{T_{\nu}}\Big|\nu(t_j;\psi; \rho) - 0\Big|^2
+ \frac{1}{|T-T_{\nu}|}\sum_{j=T_{\nu}}^{T}\Big|\nu(t_j;\psi; \rho) S(t_j;\psi; \rho) - \tilde{V}(t_j)\Big|^2\\
& + \sum_{i=2}^{M}\Big|I_i(t_{\delta_{1}}; \psi; \rho) - p_{\delta_{1}} \tilde{I}(t_{\delta_{1}})\Big|^2
+ \sum_{i=2}^{M}\Big|I_i(t_{\delta_{2}}; \psi; \rho) - p_{\delta_{2}} \tilde{I}(t_{\delta_{2}})\Big|^2\\
& + \frac{1}{T_{\delta}}\sum_{i=2}^{M}\sum_{j=1}^{T_{\delta}}\Big|\beta_i(t_j; \pi_i;\kappa_i) - 0\Big|^2\\
& +  \frac{1}{|T-T_{\delta}|}\sum_{i=2}^{M}\sum_{j=T_{\delta}}^{T}\Big|\beta_1(t_j; \pi_1;\kappa_1)(1 + \tau_i) - \beta_i(t_j; \pi_i;\kappa_i)\Big|\\
& + \sum_{l=1}^5 L_l,
 \end{split}
\end{equation} }

{Minimize the $MSE$ loss function:
compute $\argmin\limits_{\{\psi; \pi_i \}}(MSE)$ using an optimizer such as the $L-BFGS-B$.}

\State {return EINN solution}

{$S(t_j;\psi;\rho)$, $E(t_j;\psi;\rho)$, $I_i(t_j;\psi;\rho)$, $R(t_j;\psi;\rho)$, $j=1,\ldots,T$, $i=1,\ldots,M$.}

{epidemiology parameters: $\gamma_i$, $i=1,\ldots,M$.}

{vaccination parameter: $\nu$ }

\State {return time-varying epidemiology parameter:}

{$\beta_i(t_j;\pi_i;\kappa_i)$, $j=1,\ldots,T$, $i=1,\ldots,M$. }

{Infectiousness factor: $\kappa_i$, $i=1,\ldots,M$. }

{parameter: $\tau_i$, $i=2,\ldots,M$. }

\end{algorithmic}
\caption{EINN algorithm for SEIR model with time-varying transmission rate \label{alg:Epidemiology informed neural network_var}}
\end{algorithm}

\section{Data-driven simulation of COVID-19 variants}
\label{data1}
We present results of the implementation of the EINN algorithm in Figure~\eqref{Sch_A_SEIR} for COVID-19 data from Alabama, Missouri, Tennessee, and Florida. We consider data from March 2020 to September 2021, during which there were two dominant variants;the original variant SARS-CoV-2 and the delta variant (B.1.617.2). CDC report indicate that $1.3\%$ of the total infected cases were due to the delta variant in May 4th 2021~\citep*{CDC2021_2}. The EINN algorithm learns the infected cases, and the time-varying transmission rates due to each variant. In Table~\eqref{Table:al1}--\eqref{Table:fl1}, pre-$\gamma_1$, post-$\gamma_1$, post-$\gamma_2$ denote the recovery rate of people infected due to the original variant SARS-CoV-2 before the onset of the delta variant, recovery rate of people infected due to the original variant SARS-CoV-2 after the onset of the delta variant, and recovery rate of people infected due to the delta variant after the onset of the delta variant respectively.

The CDC reports that by July 31st, 2021, the proportions of infected cases that are due to the B.1.617.2 delta variant in Alabama was $82.6\%$, Tennessee was $67.4\%$, Missouri $53.9\%$, and in Florida, it was $86.4\%$ \cite{CDC2021}. The CDC also reported that in the USA, the delta variant accounted for about $1.3\%$ of the infected cases.

We seek to learn $\tau_i$ for an $i$th mutating variant. For the simulations in this section, we observed that the delta variant is a dominant mutating variant therefore we included only two variants, the SARS-CoV-2 and the delta variant.  

\begin{table}[htbp]
\centering
\begin{tabular}{lcc}
\addlinespace
\toprule
{Parameters} & {Mean} & {Std} \\
\toprule
pre- $\gamma_1$  & 0.02423 & 0.01266 \\
post- $\gamma_1$  & 0.00395 & 0.00717 \\
post- $\gamma_2$  & 0.00463 & 0.00768 \\
$\eta_1$  & 0.12437 & 0.04933 \\
$\eta_2$  & 0.20893 & 0.04933 \\
$\kappa_1$  & 1.07385 & 0.06271 \\
$\kappa_2$  & 1.13052 & 0.02809 \\
$(1 + \tau)$  & 1.22391 & 0.10176 \\
\midrule
\end{tabular}
\caption{Using Alabama daily cases from March 2020 to September 2021, the EINN Algorithm~\eqref{alg:Epidemiology informed neural network_var} learns the model parameters} 
\label{Table:al1}
\end{table}

\begin{figure}[htbp]
\centering
\includegraphics[width=17cm]{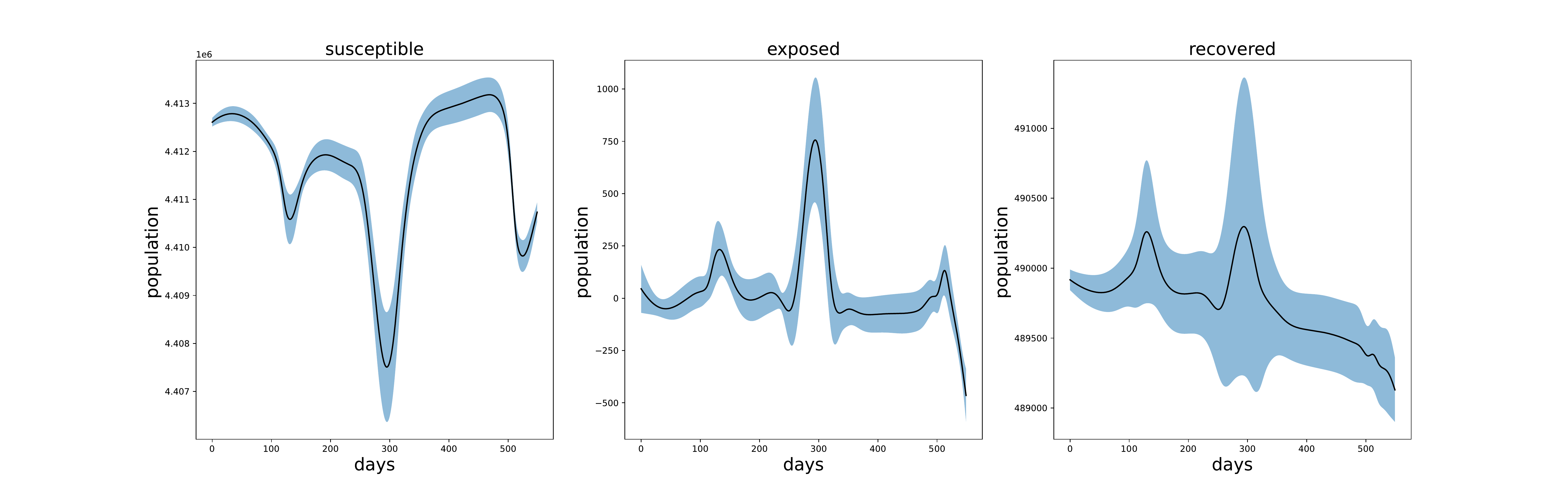}
\caption{learned Alabama Susceptible, Exposed, and Recovered daily population}
\label{alabama_4.1}
\end{figure}

\begin{figure}[htbp]
\centering
\includegraphics[width=17cm]{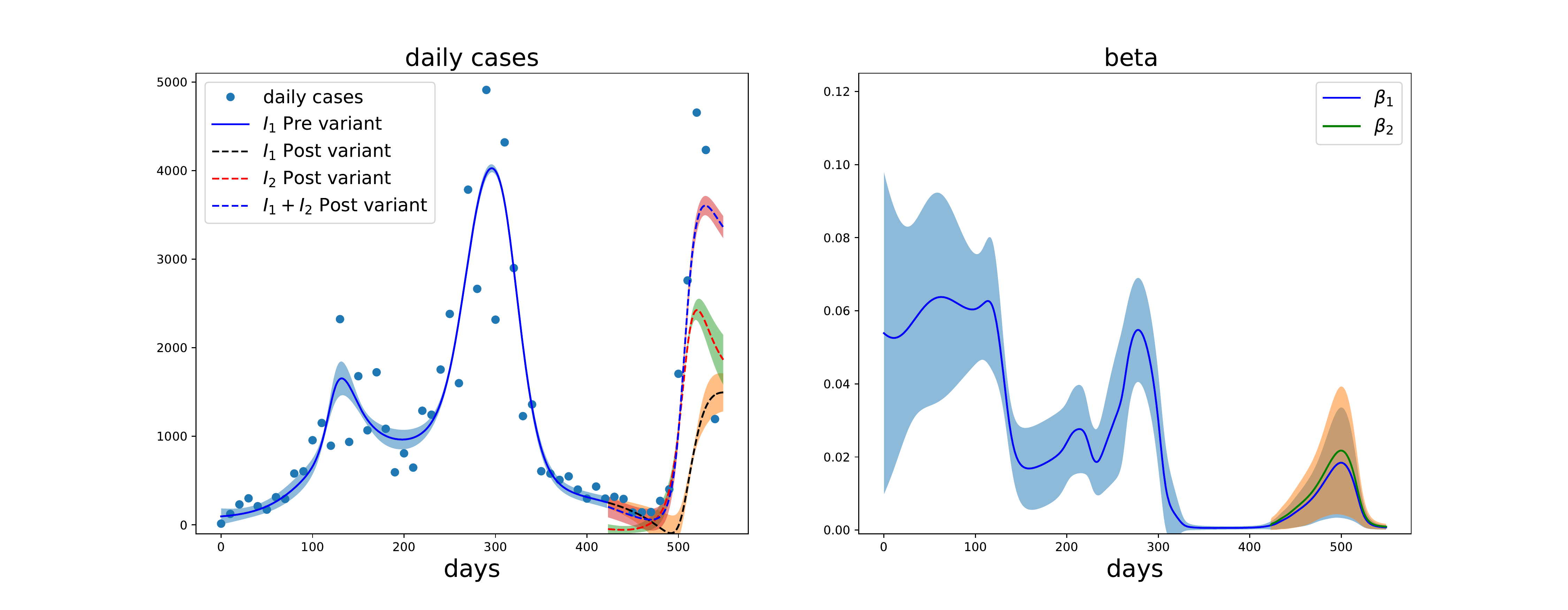}
\caption{Alabama daily cases and time-varying transmission rates}
\label{alabama_4}
\end{figure}


\begin{table}[htbp]
\centering
\begin{tabular}{lcc}
\addlinespace
\toprule
{Parameters} & {Mean} & {Std} \\
\toprule
pre- $\gamma_1$  & 0.02344 & 0.00613 \\
post- $\gamma_1$  & 0.00911 & 0.00426 \\
post- $\gamma_2$  & 0.02095 & 0.01794 \\
$\eta_1$  & 0.15912 & 0.03381 \\
$\eta_2$  & 0.17420 & 0.03383 \\
$\kappa_1$  & 1.01114 & 0.03561 \\
$\kappa_2$  & 1.10474 & 0.01358 \\
$(1 + \tau)$  & 1.15537 & 0.08817 \\
\midrule
\end{tabular}
\caption{Using Missouri daily cases from March 2020 to September 2021, the EINN Algorithm~\eqref{alg:Epidemiology informed neural network_var} learns the model parameters} 
\label{Table:mo1}
\end{table}

\begin{figure}[htbp]
\centering
\includegraphics[width=17cm]{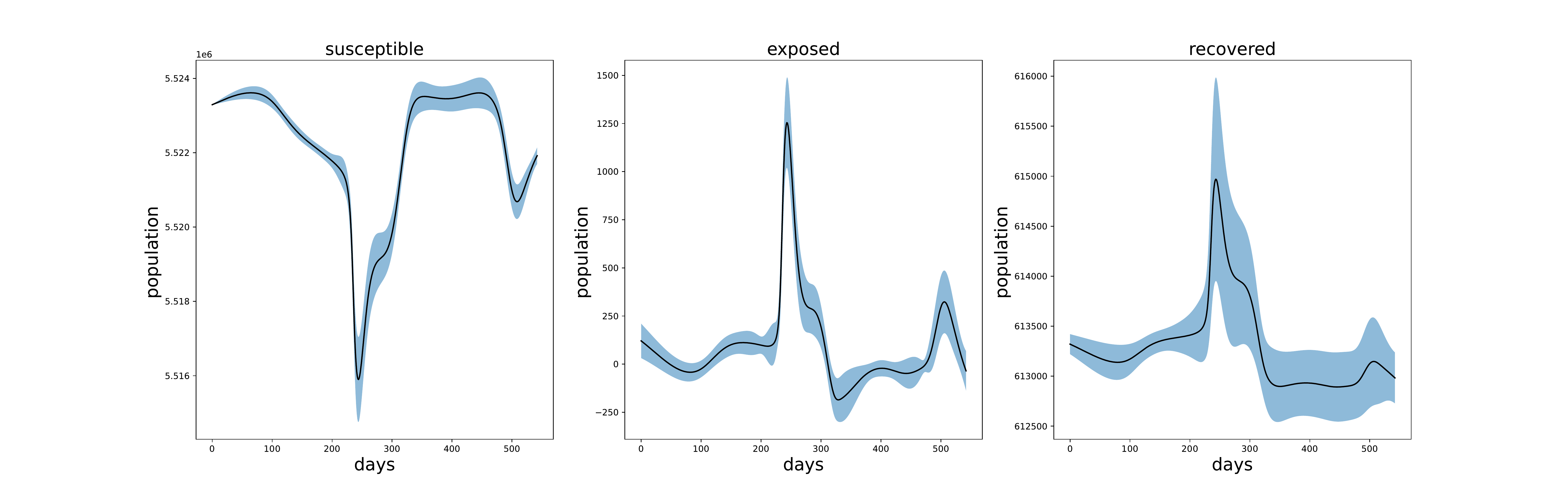}
\caption{learned Missouri Susceptible, Exposed, and Recovered daily population}
\label{missouri_4.1}
\end{figure}

\begin{figure}[htbp]
\centering
\includegraphics[width=17cm]{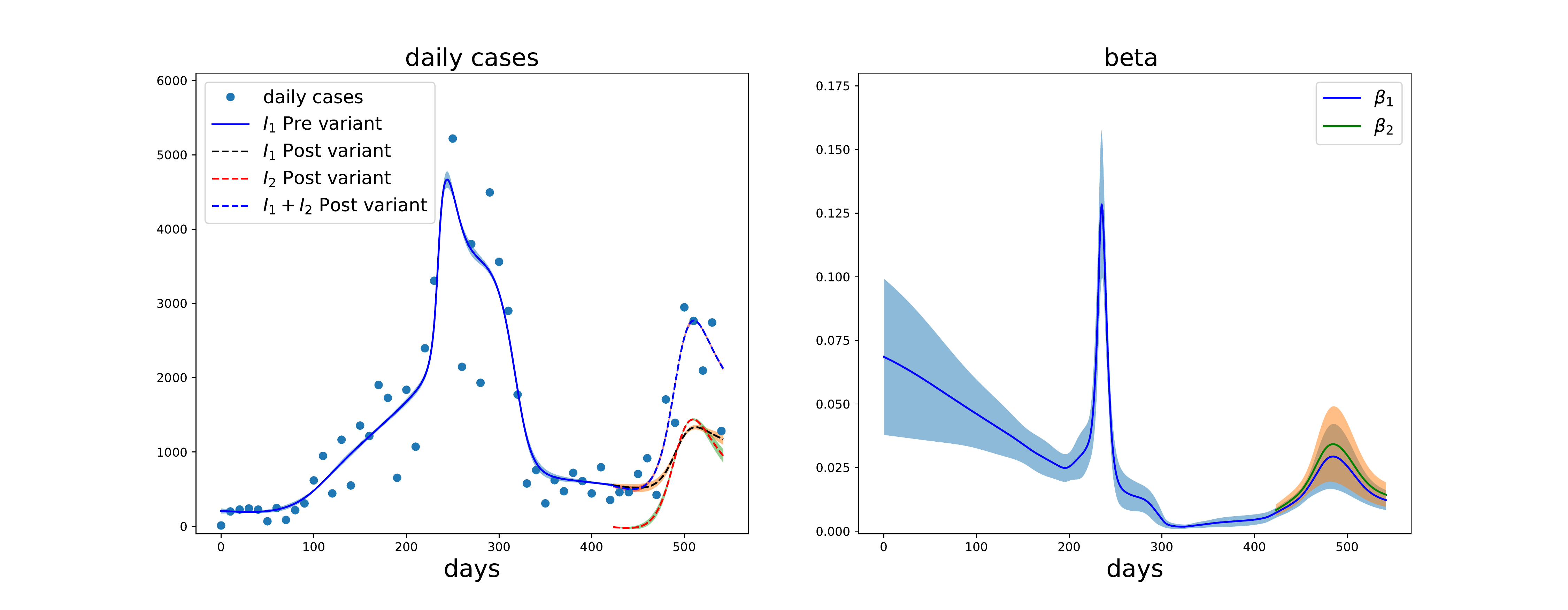}
\caption{Missouri daily cases and time-varying transmission rates}
\label{missouri}
\end{figure}


\begin{table}[htbp]
\centering
\begin{tabular}{lcc}
\addlinespace
\toprule
{Parameters} & {Mean} & {Std} \\
\toprule
pre- $\gamma_1$  & 0.01259 & 0.01111 \\
post- $\gamma_1$  & 0.00587 & 0.00821 \\
post- $\gamma_2$  & 0.00849 & 0.01081 \\
$\eta_1$  & 0.13721 & 0.03429 \\
$\eta_2$  & 0.19611 & 0.03427 \\
$\kappa_1$  & 1.04761 & 0.03035 \\
$\kappa_2$  & 1.13552 & 0.02867 \\
$(1 + \tau)$  & 1.09879 & 0.09738 \\
\midrule
\end{tabular}
\caption{Using Tennessee daily cases from March 2020 to September 2021, EINN Algorithm~\eqref{alg:Epidemiology informed neural network_var} learns the model parameters} 
\label{Table:tn1}
\end{table}

\begin{figure}[htbp]
\centering
\includegraphics[width=17cm]{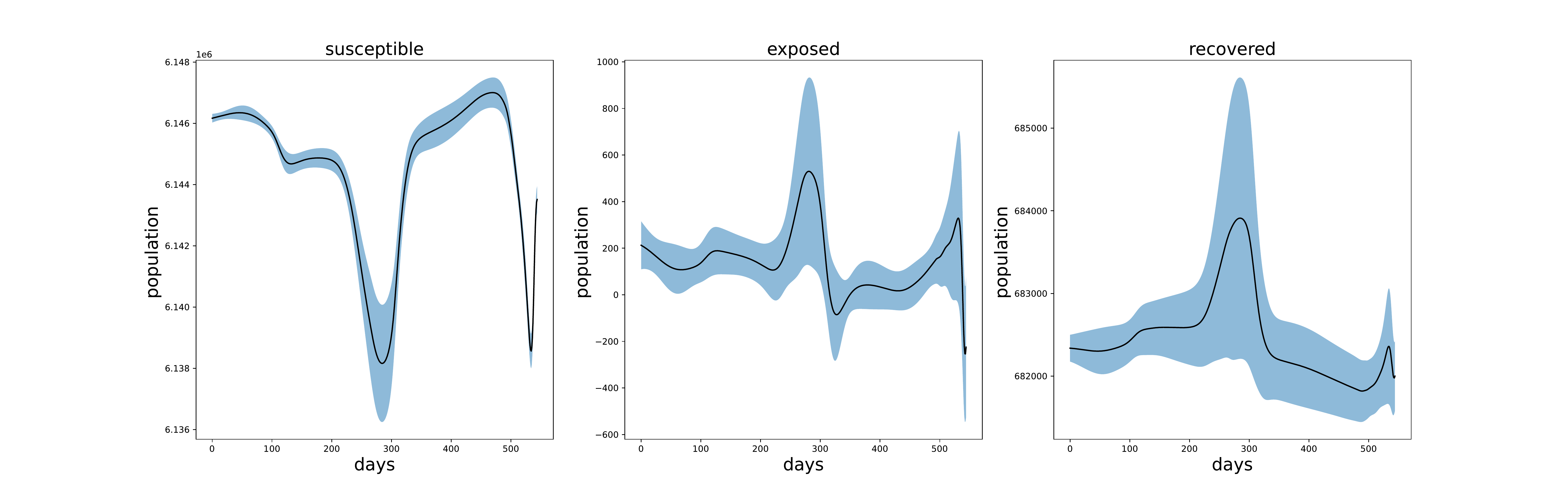}
\caption{learned Tennessee Susceptible, Exposed, and Recovered daily population}
\label{tennessee_4.1}
\end{figure}

\begin{figure}[htbp]
\centering
\includegraphics[width=17cm]{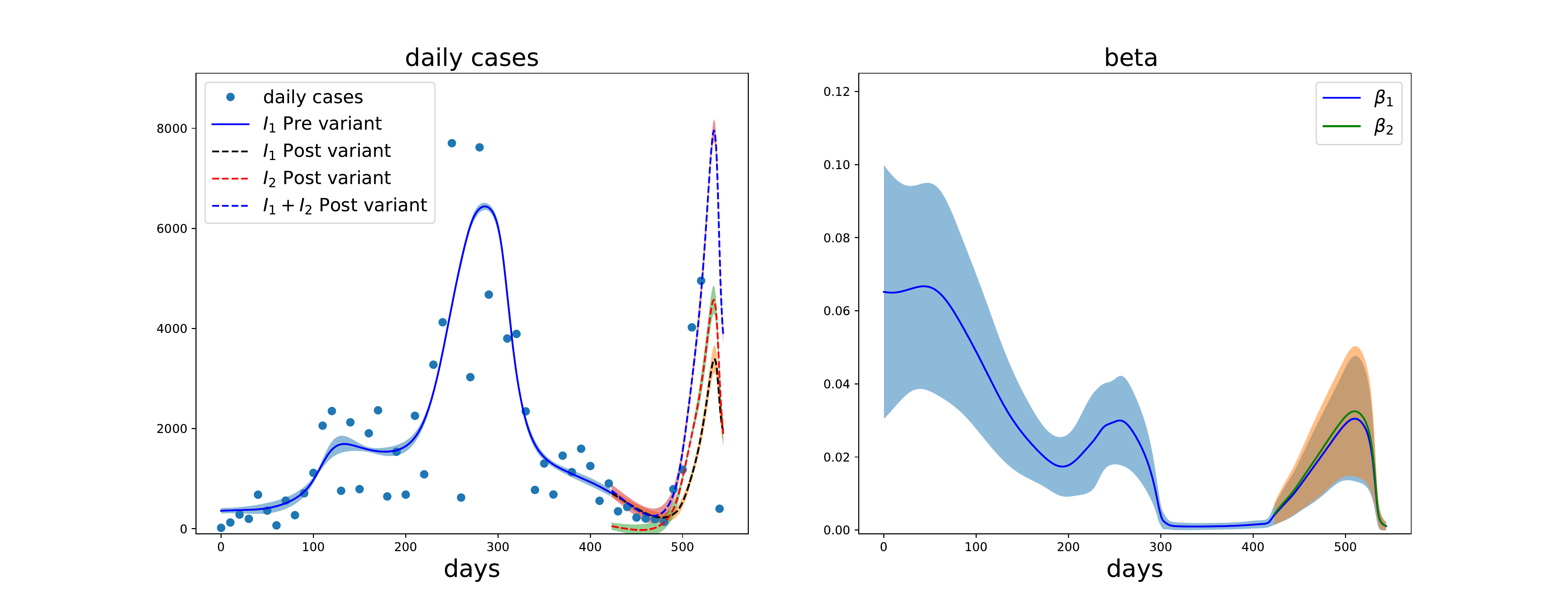}
\caption{Tennessee daily cases and time-varying transmission rates}
\label{tennessee}
\end{figure}


\begin{table}[htbp]
\centering
\begin{tabular}{lcc}
\addlinespace
\toprule
{Parameters} & {Mean} & {Std} \\
\toprule
pre- $\gamma_1$  & 0.02968 & 0.01594 \\
post- $\gamma_1$  & 0.00943 & 0.00985 \\
post- $\gamma_2$  & 0.00576 & 0.00516 \\
$\eta_1$  & 0.09304 & 0.06144 \\
$\eta_2$  & 0.24027 & 0.06143 \\
$\kappa_1$  & 1.03508 & 0.02477 \\
$\kappa_2$  & 1.13773 & 0.00892 \\
$(1 + \tau)$  & 1.12553 & 0.11431 \\
\midrule
\end{tabular}
\caption{Using Florida daily cases from March 2020 to September 2021, EINN Algorithm~\eqref{alg:Epidemiology informed neural network_var} learns the model parameters} 
\label{Table:fl1}
\end{table}

\begin{figure}[htbp]
\centering
\includegraphics[width=17cm]{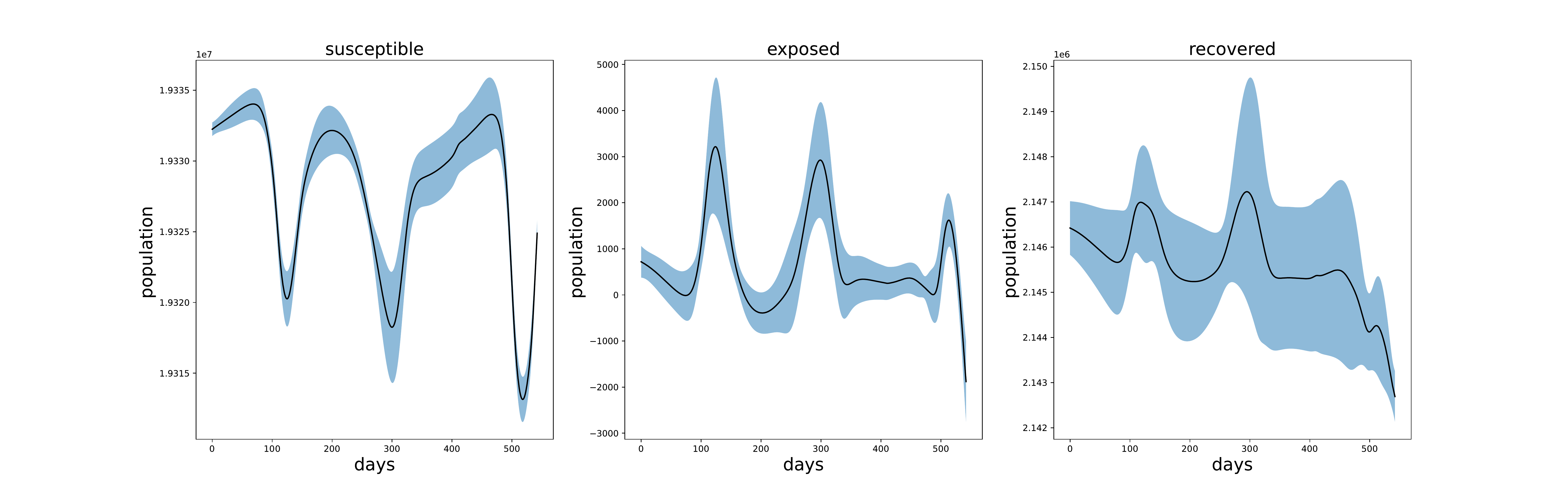}
\caption{learned Florida Susceptible, Exposed, and Recovered daily population}
\label{florida_4.1}
\end{figure}

\begin{figure}[htbp]
\centering
\includegraphics[width=17cm]{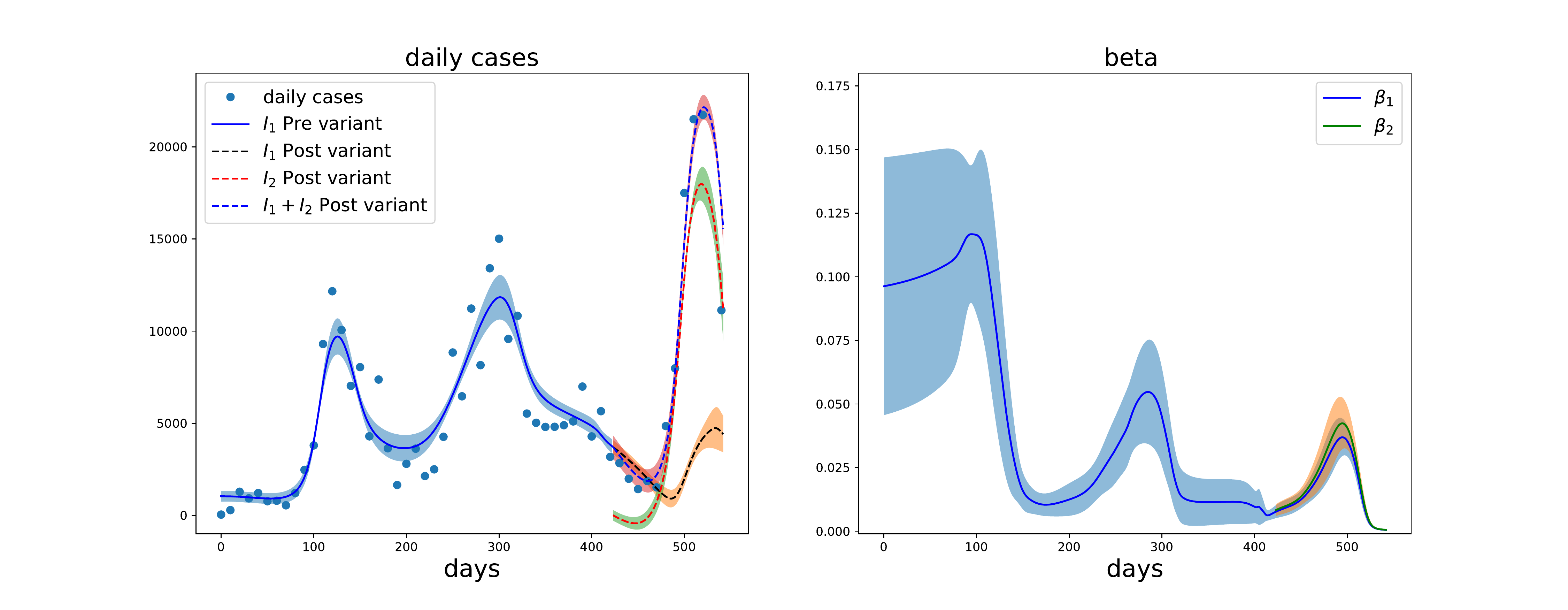}
\caption{Florida daily cases and time-varying transmission rates}
\label{florida}
\end{figure}

\section{Forecasting daily cases}

Forecasting the spread of infectious diseases in many studies are based on multiple linear regression (MLR), ordinary least squares regression (OLSR), principal component regression (PCR) and partial least squares regression (PLSR) and statistical methods such as the Auto Regressive Moving Average (ARIMA) and its many variants~\cite{Eftekhari2018,Du2018,Chimula2020}. These statistical methods are not optimal for nonlinear predictive task. This has motivated a shift towards techniques that rely on neural networks and neuro-fuzzy models~\cite{Hossain2020}. In this Section, we present an hybrid neural network that combines the simplicity and nonlinear learning capabilities of the Epidemiology-informed neural network (EINN) as well as the fuzzy inference system (ANFIS).

Adaptive neuro-fuzzy inference system (ANFIS), an hybrid neural network itself, is a combination of fuzzy logic and a feedforward neural network. It incorporates the advantages of both methods including learning capabilities, interpretability, quick convergence, adaptability and high accuracy. ANFIS displays excellent performance in approximation and prediction of nonlinear relationships in various fields \cite{Vacilopoulos2008}.

The Adaptive Neuro-Fuzzy Inference System (ANFIS) was introduced in  \cite{Jang1993}. It combines a neural network with a fuzzy inference system (FIS) based on ``IF-THEN'' rules. One major advantage of FIS is that it does not require knowledge of the main physical process as a pre-condition. ANFIS combines FIS with a backpropagation algorithm. These techniques provide a method for the fuzzy modeling procedure to learn from the available dataset, in order to compute the membership function parameters that best allow the  fuzzy inference system to track the given input/output data. 


To forecast the transmission of a multi-variant COVID-19, we present an efficient deep learning forecast model which combines two neural networks, we solve the ODE system using an Epidemiology Informed Neural Network (EINN) and we forecast using an adaptive neuro-fuzzy system (ANFIS), which we called the EINN-ANFIS model.

\section{Comparison of Forecasting techniques}
\label{data2}

We present results of the implementation of ANFIS, EINN-ANFIS, LSTM, EINN-LSTM for COVID-19 data from Alabama, Missouri, Tennessee, and Florida from March 2020 to September 2021. In the ANFIS approach, We used 4 regressors, 12 membership rules, and learning rate of 0.002. Training was done using 300 epochs, where we used the adams optimizer and for the loss function, we used the mean square error. The EINN-ANFIS is a hybrid neural network, where EINN is first used to train the daily cases dataset and a second round of training is done using ANFIS. In the LSTM approach, we used 4 input layers which corresponds to the daily cases at times $t$, $t+1$, $t+2$, and $t+3$. The adams optimizer is also used in training the LSTM with 20 epochs and the loss function also uses the mean square error. In the EINN-LSTM approach, a first batch of training is done using the EINN algorithm and then a second batch of training is done using LSTM. In Tables~\eqref{Table:al2}--\eqref{Table:fl2}, we present the validation loss of each method.

\begin{table}[htbp]
\centering
\begin{tabular}{lcc}
\addlinespace
\toprule
{Method} & {Mean} & {Std} \\
\toprule
ANFIS  & 0.00048 & 0.00098 \\
EINN-ANFIS  & 0.00032 & 0.00050 \\
LSTM  & 0.00141 & 0.00004 \\
LSTM-EINN  & 0.00110 & 0.00006 \\
\midrule
\end{tabular}
\caption{Validation loss in the ANFIS, EINN-ANFIS, LSTM, and LSTM-EINN forecasting technique for Alabama daily cases from March 2020 to September 2021.}
\label{Table:al2}
\end{table}

\begin{figure}[htbp]
    \subfloat[ANFIS]{\includegraphics[width = 3.2in]{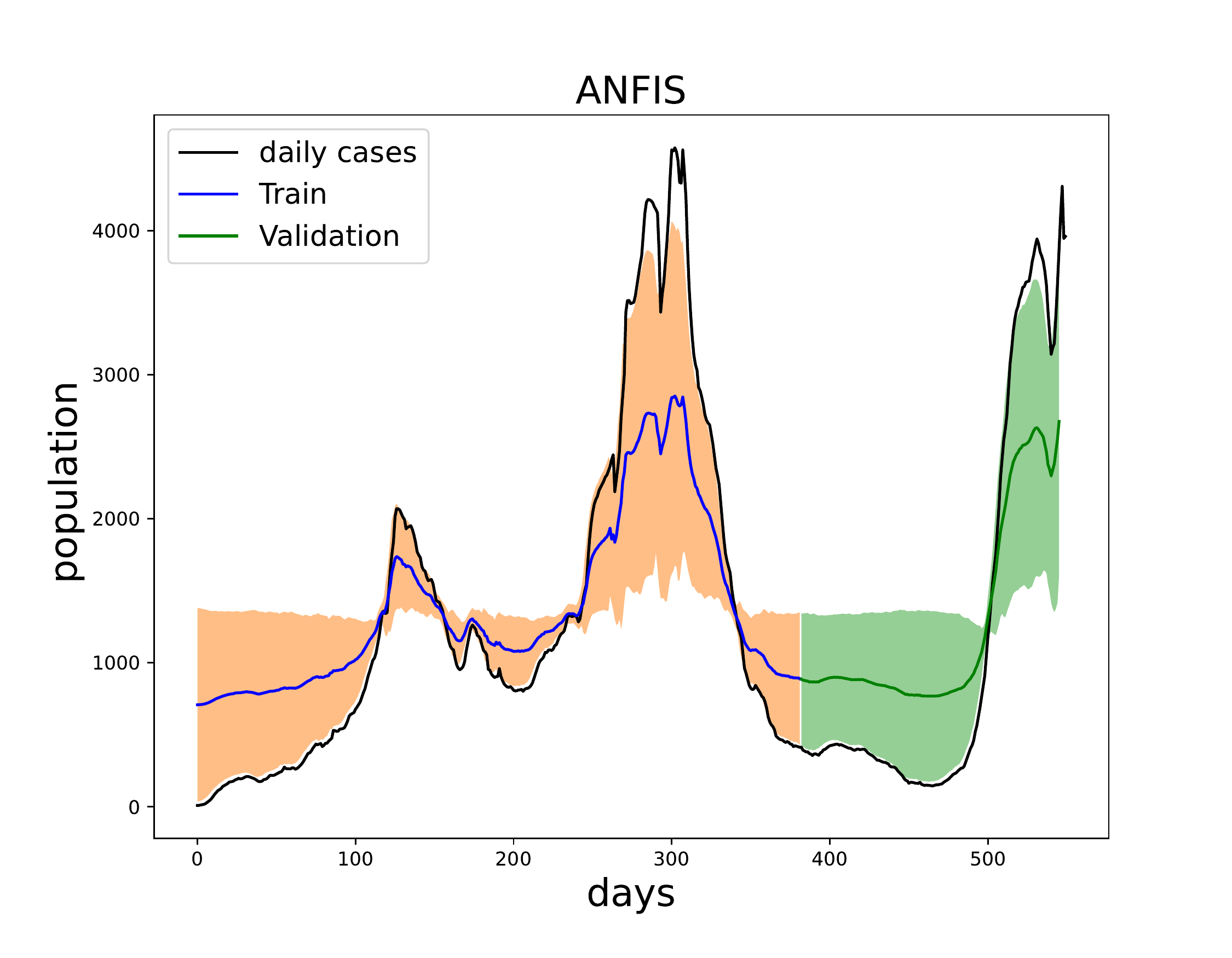}} 
    \hfill
    \subfloat[EINN-ANFIS]{\includegraphics[width = 3.2in]{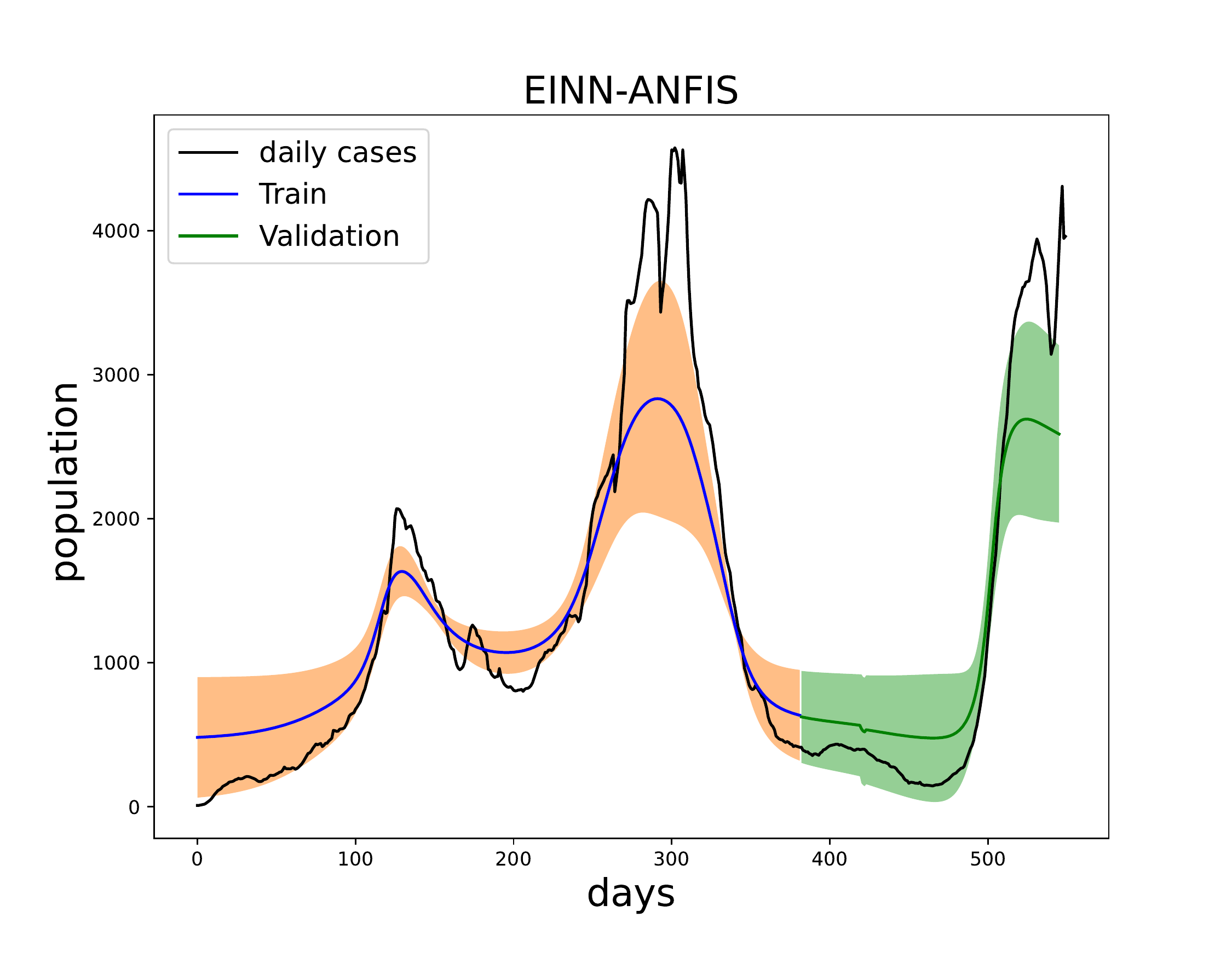}}
    \vskip .4in
    \subfloat[LSTM]{\includegraphics[width = 3.2in]{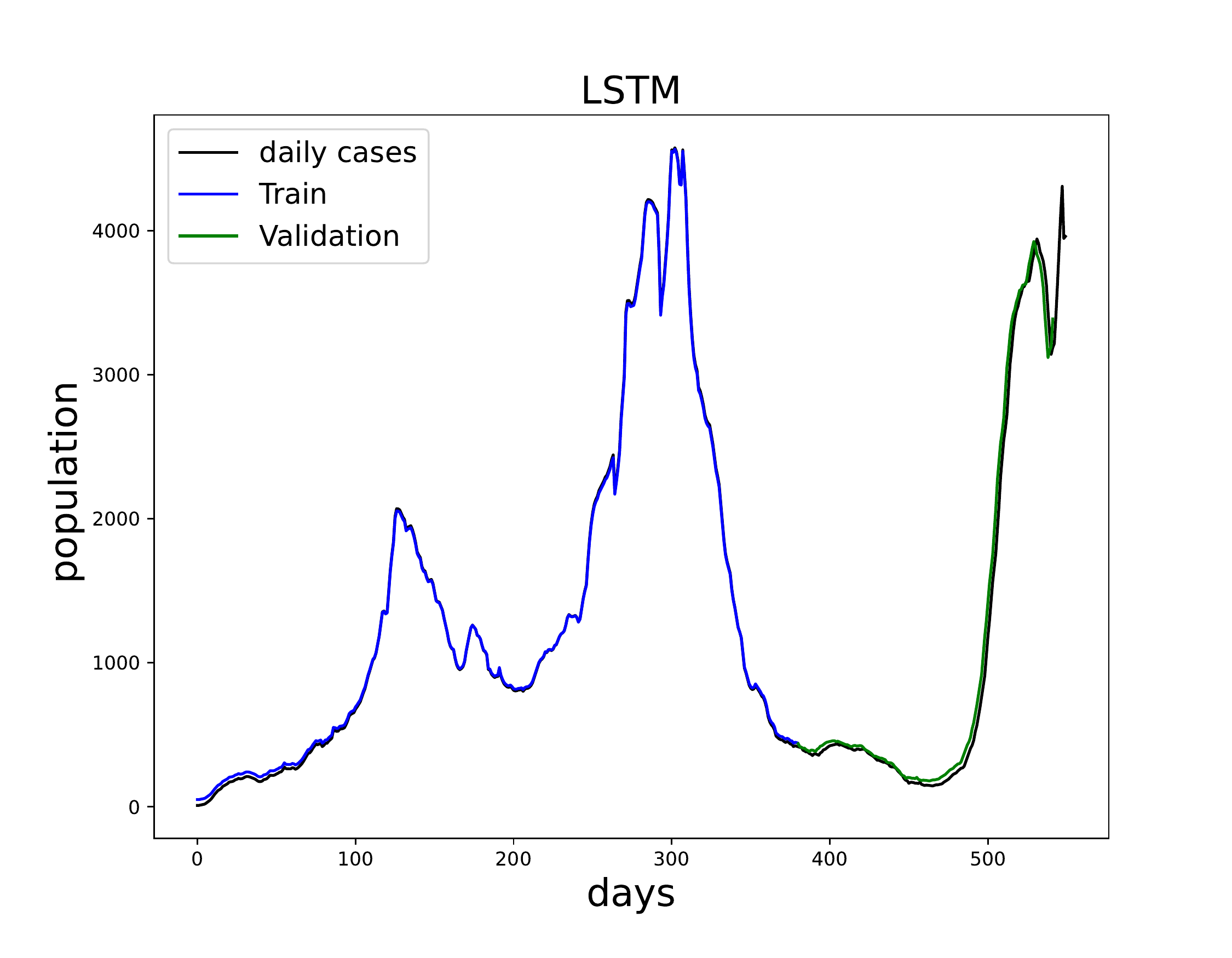}}
    \hfill
    \subfloat[EINN-LSTM]{\includegraphics[width = 3.2in]{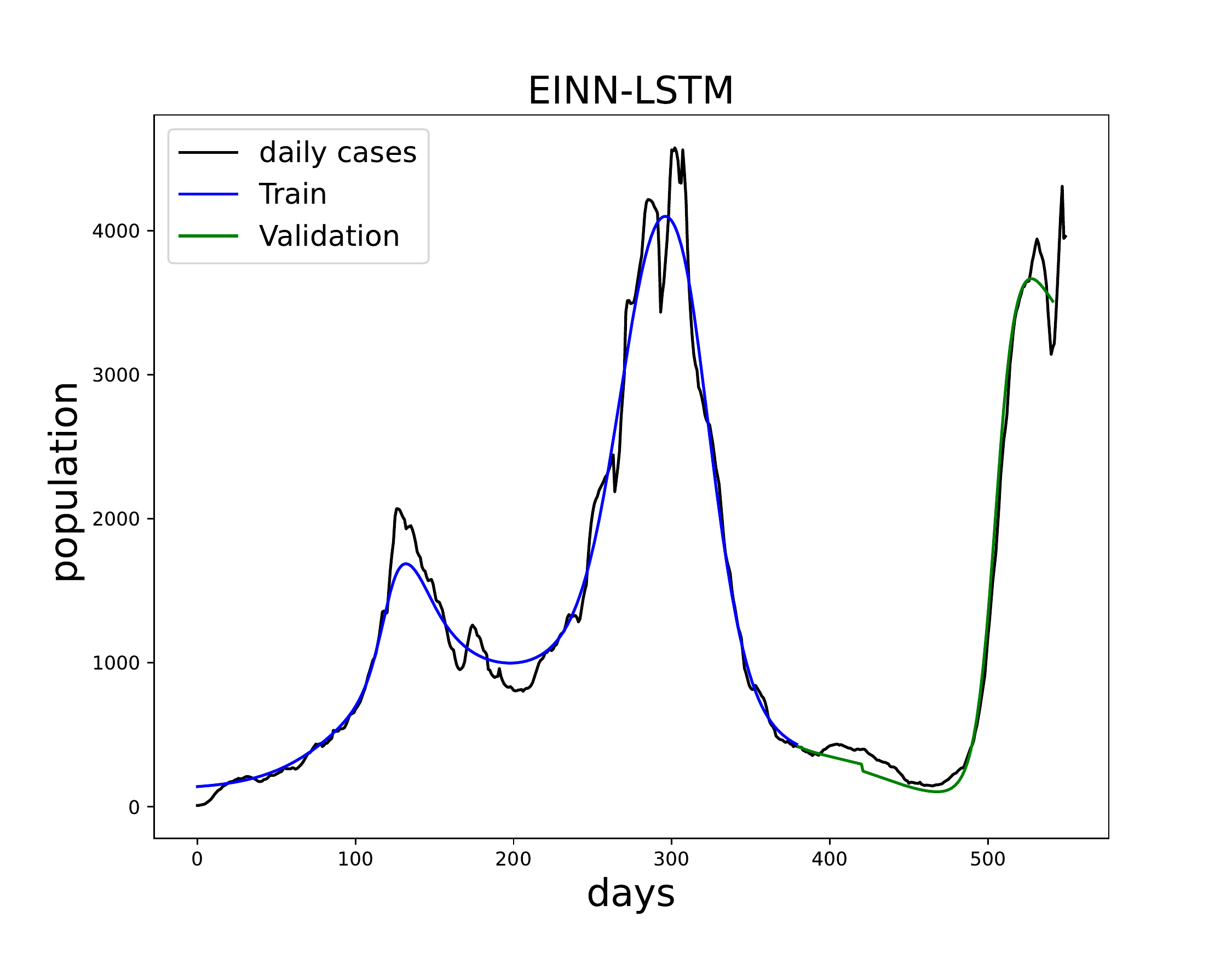}} 
\caption{Alabama daily cases forecasting using ANFIS, EINN-ANFIS, LSTM, LSTM-EINN}
\label{alabama-anfis}
\end{figure}


\begin{table}[htbp]
\centering
\begin{tabular}{lcc}
\addlinespace
\toprule
{Method} & {Mean} & {Std} \\
\toprule
ANFIS  & 0.00011 & 0.00019 \\
EINN-ANFIS  & 0.00004 & 0.00006 \\
LSTM  & 0.00333 & 0.00003 \\
LSTM-EINN  & 0.00118 & 0.00012 \\
\midrule
\end{tabular}
\caption{Validation loss in the ANFIS, EINN-ANFIS, LSTM, and LSTM-EINN forecasting technique for Missouri daily cases from March 2020 to September 2021.}
\label{Table:mo2}
\end{table}

\begin{figure}[htbp]
    \subfloat[ANFIS]{\includegraphics[width = 3.2in]{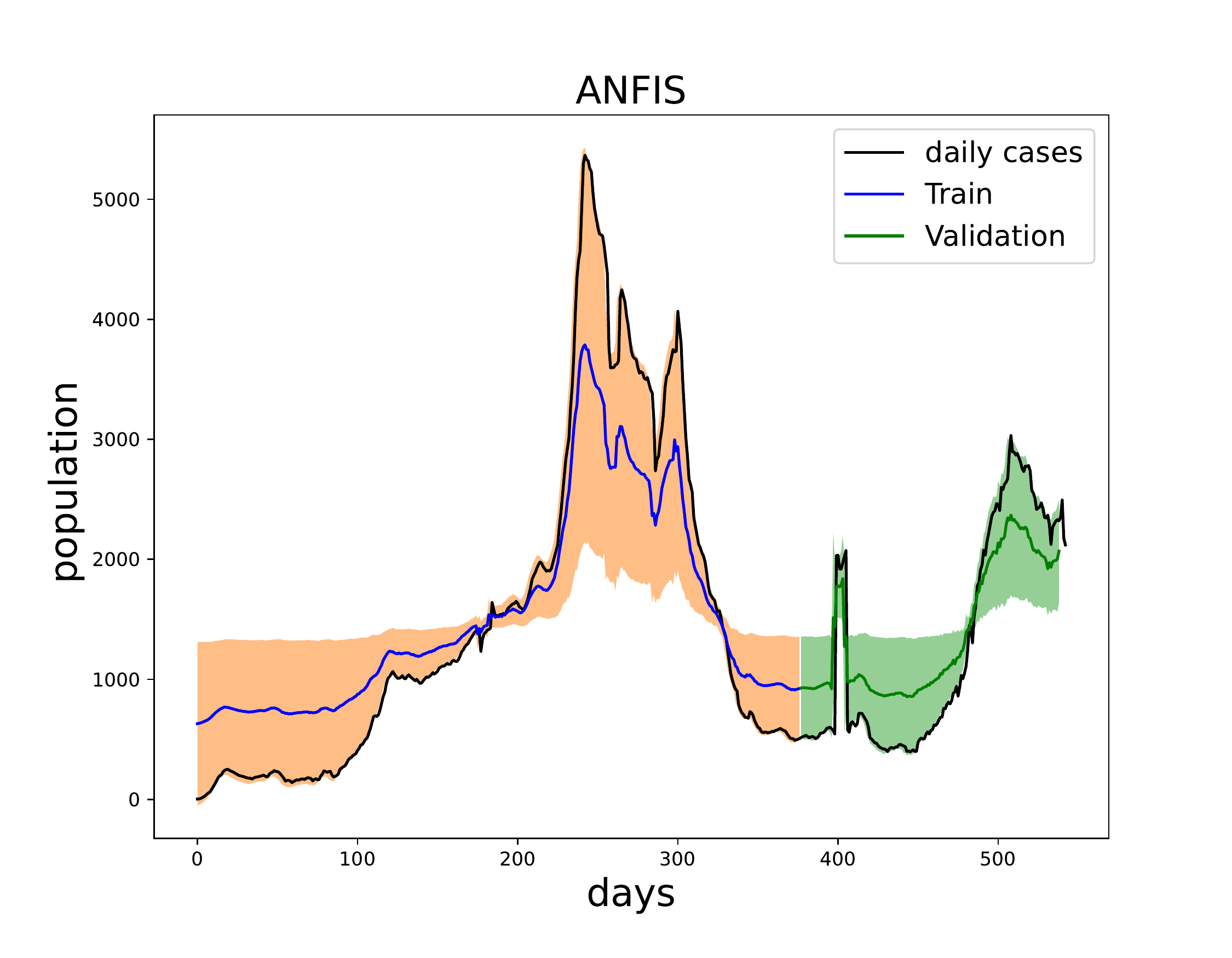}} 
    \hfill
    \subfloat[EINN-ANFIS]{\includegraphics[width = 3.2in]{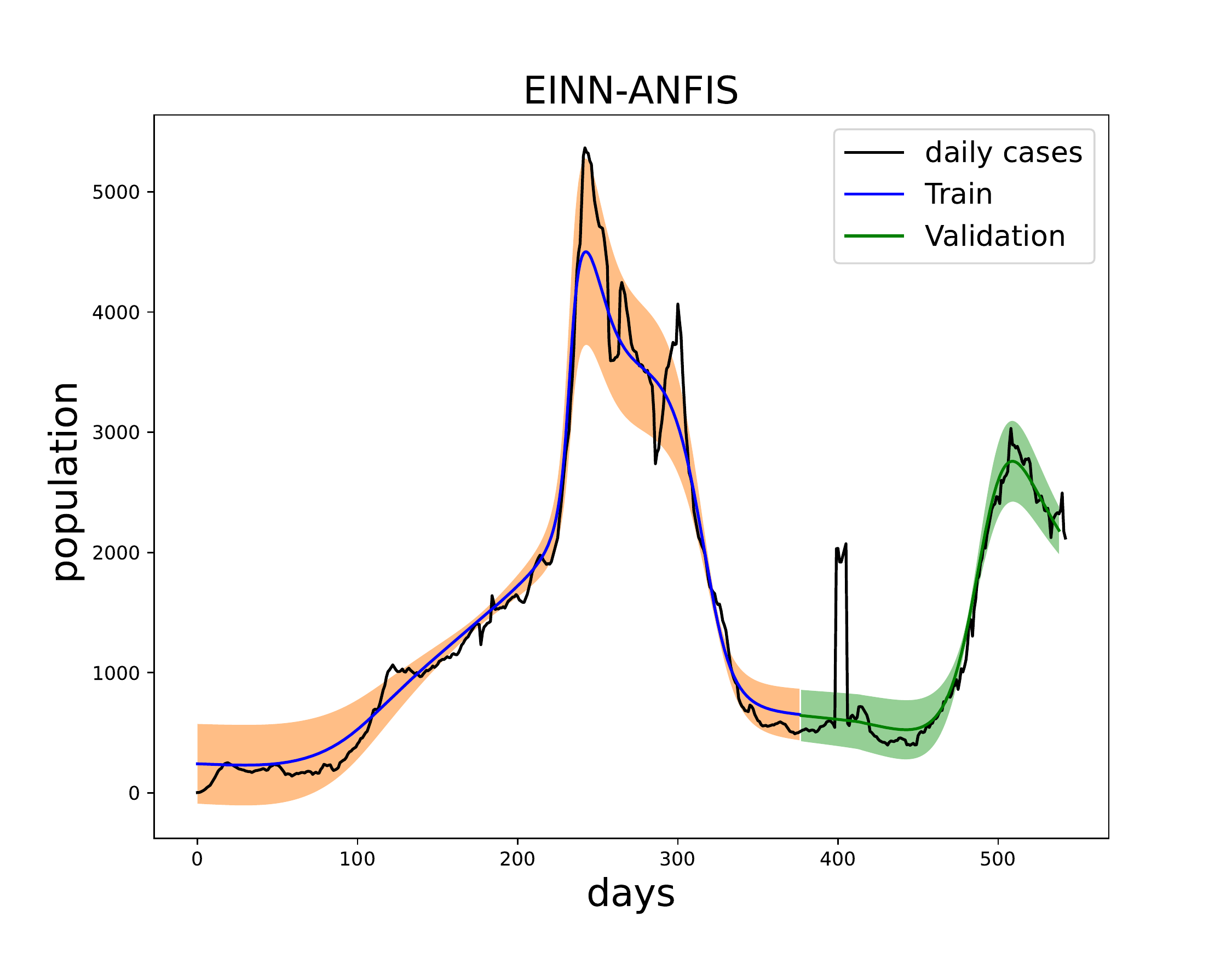}}
    \vskip .4in
    \subfloat[LSTM]{\includegraphics[width = 3.2in]{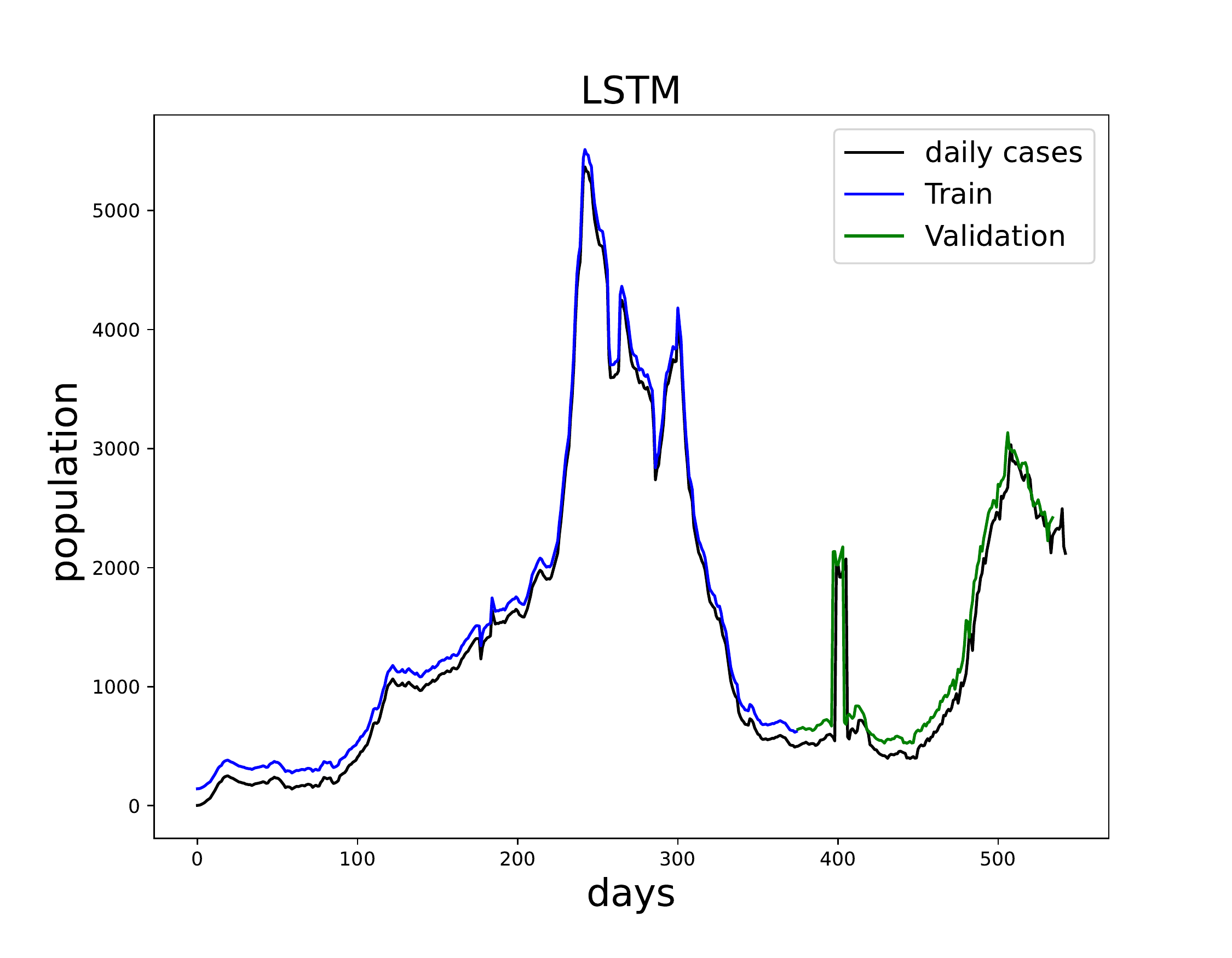}}
    \hfill
    \subfloat[EINN-LSTM]{\includegraphics[width = 3.2in]{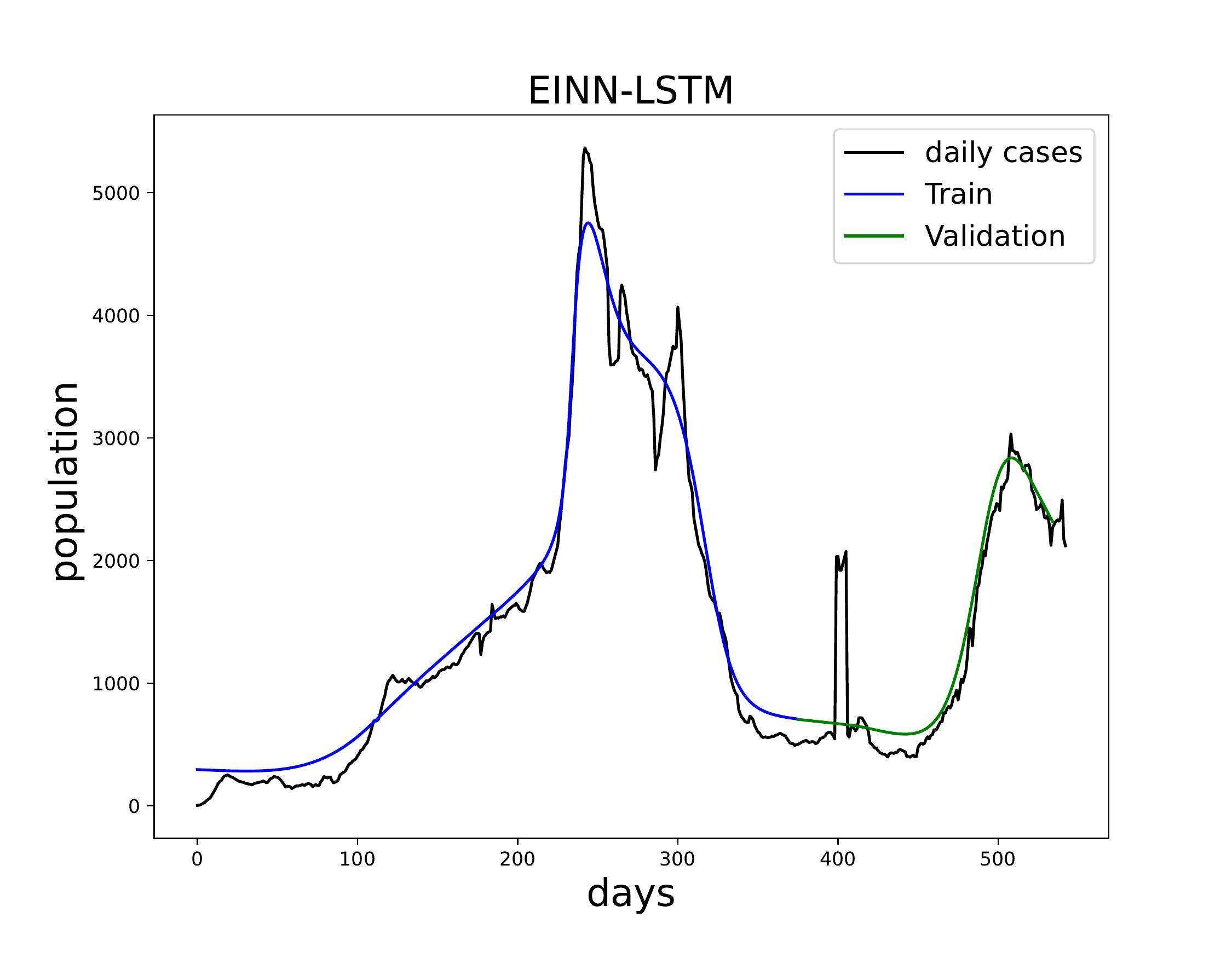}} 
\caption{Missouri daily cases forecasting using ANFIS, EINN-ANFIS, LSTM, LSTM-EINN}
\label{missouri-anfis}
\end{figure}


\begin{table}[htbp]
\centering
\begin{tabular}{lcc}
\addlinespace
\toprule
{Method} & {Mean} & {Std} \\
\toprule
ANFIS  & 0.00061 & 0.00125 \\
EINN-ANFIS  & 0.00033 & 0.00056 \\
LSTM  & 0.00267 & 0.00011 \\
LSTM-EINN  & 0.00183 & 0.00009 \\
\midrule
\end{tabular}
\caption{Validation loss in the ANFIS, EINN-ANFIS, LSTM, and LSTM-EINN forecasting technique for Tennessee daily cases from March 2020 to September 2021.}
\label{Table:tn2}
\end{table}

\begin{figure}[htbp]
    \subfloat[ANFIS]{\includegraphics[width = 3.2in]{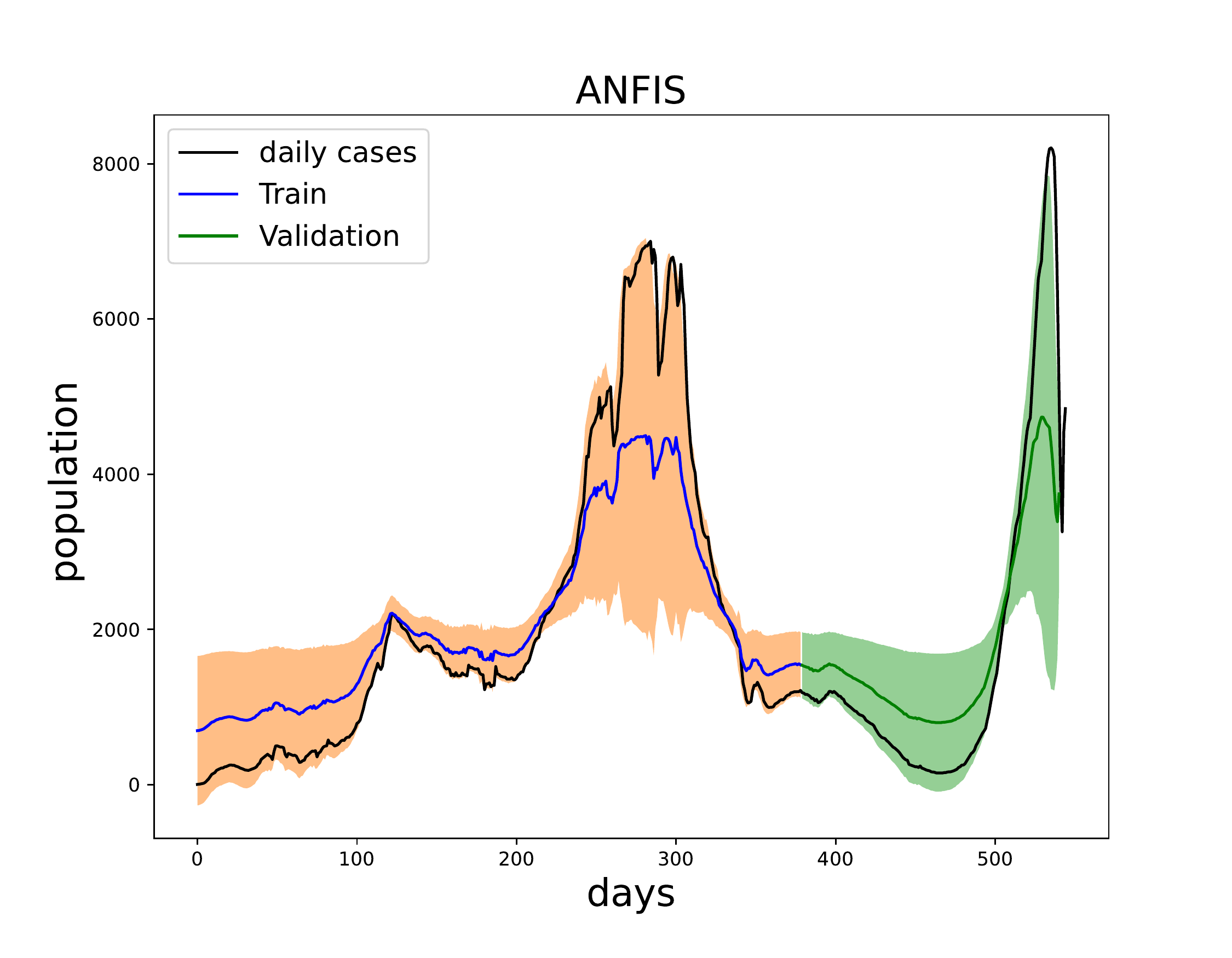}} 
    \hfill
    \subfloat[EINN-ANFIS]{\includegraphics[width = 3.2in]{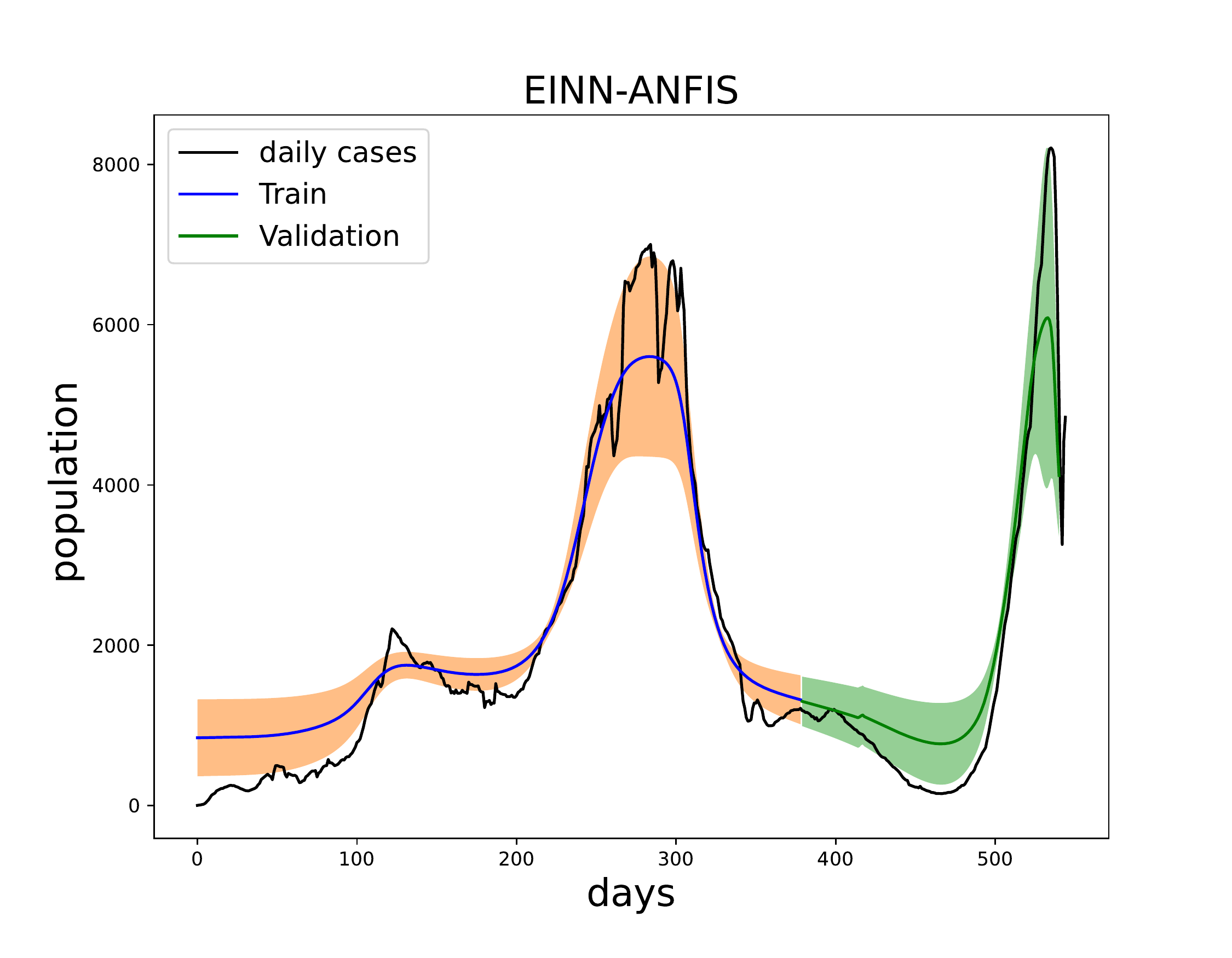}}
    \vskip .4in
    \subfloat[LSTM]{\includegraphics[width = 3.2in]{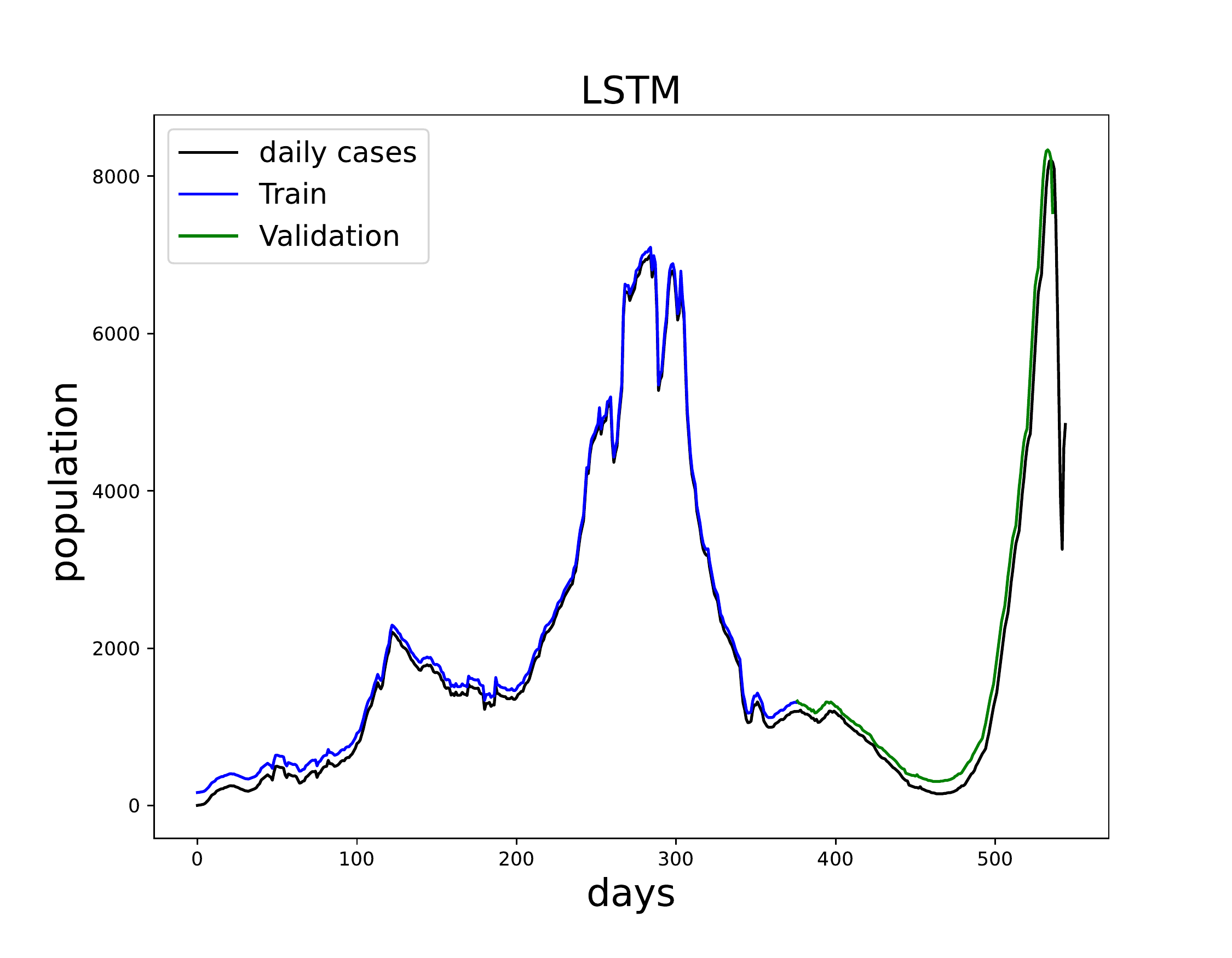}}
    \hfill
    \subfloat[EINN-LSTM]{\includegraphics[width = 3.2in]{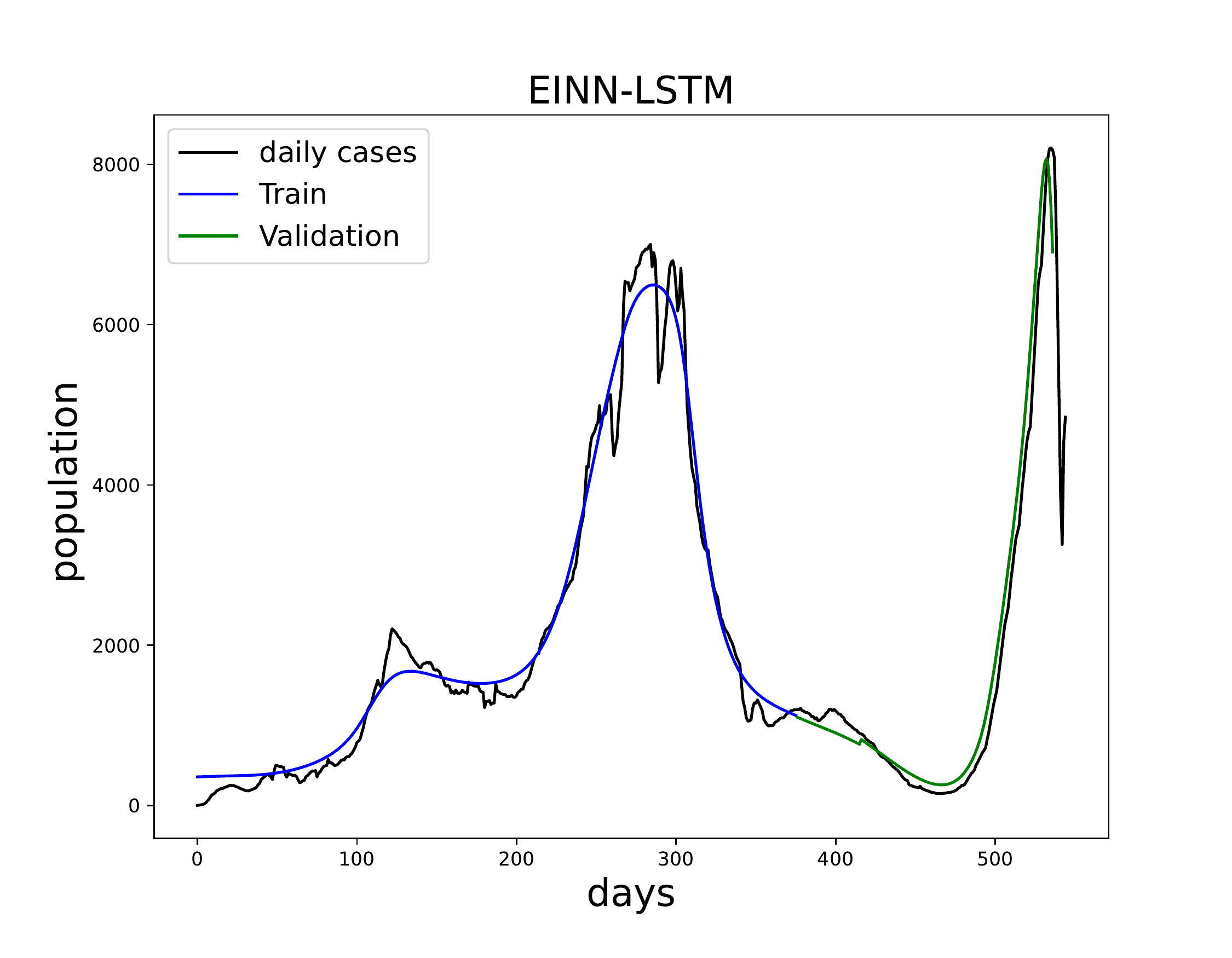}} 
\caption{Tennessee daily cases forecasting using ANFIS, EINN-ANFIS, LSTM, LSTM-EINN}
\label{tennessee-anfis}
\end{figure}


\begin{table}[htbp]
\centering
\begin{tabular}{lcc}
\addlinespace
\toprule
{Method} & {Mean} & {Std} \\
\toprule
ANFIS  & 0.00199 & 0.00284 \\
EINN-ANFIS  & 0.00249 & 0.00347 \\
LSTM  & 0.00169 & 0.00009 \\
LSTM-EINN  & 0.00149 & 0.00014 \\
\midrule
\end{tabular}
\caption{Validation loss in the ANFIS, EINN-ANFIS, LSTM, and LSTM-EINN forecasting technique for Florida daily cases from March 2020 to September 2021.}
\label{Table:fl2}
\end{table}

\begin{figure}[htbp]
    \subfloat[ANFIS]{\includegraphics[width = 3.2in]{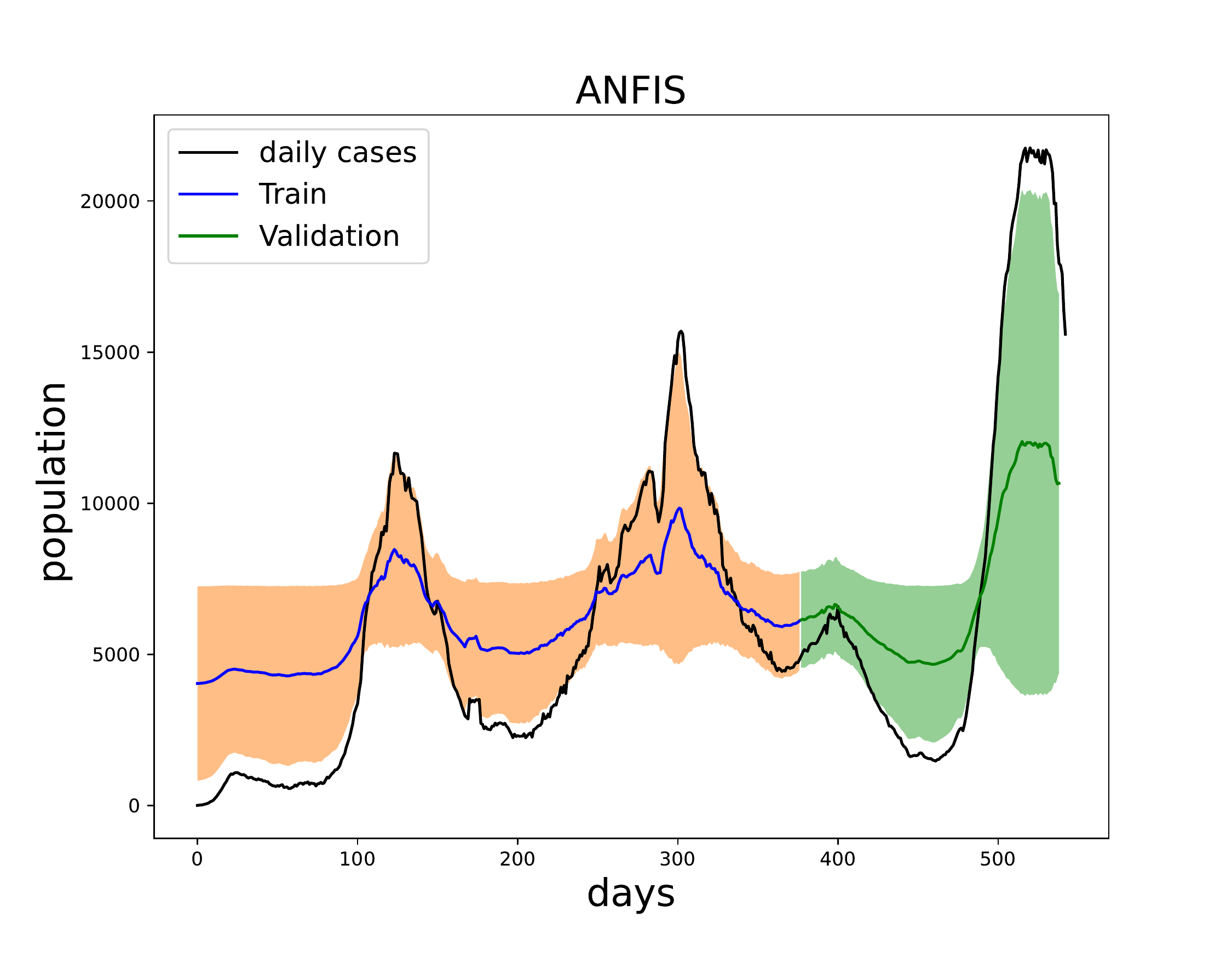}} 
    \hfill
    \subfloat[EINN-ANFIS]{\includegraphics[width = 3.2in]{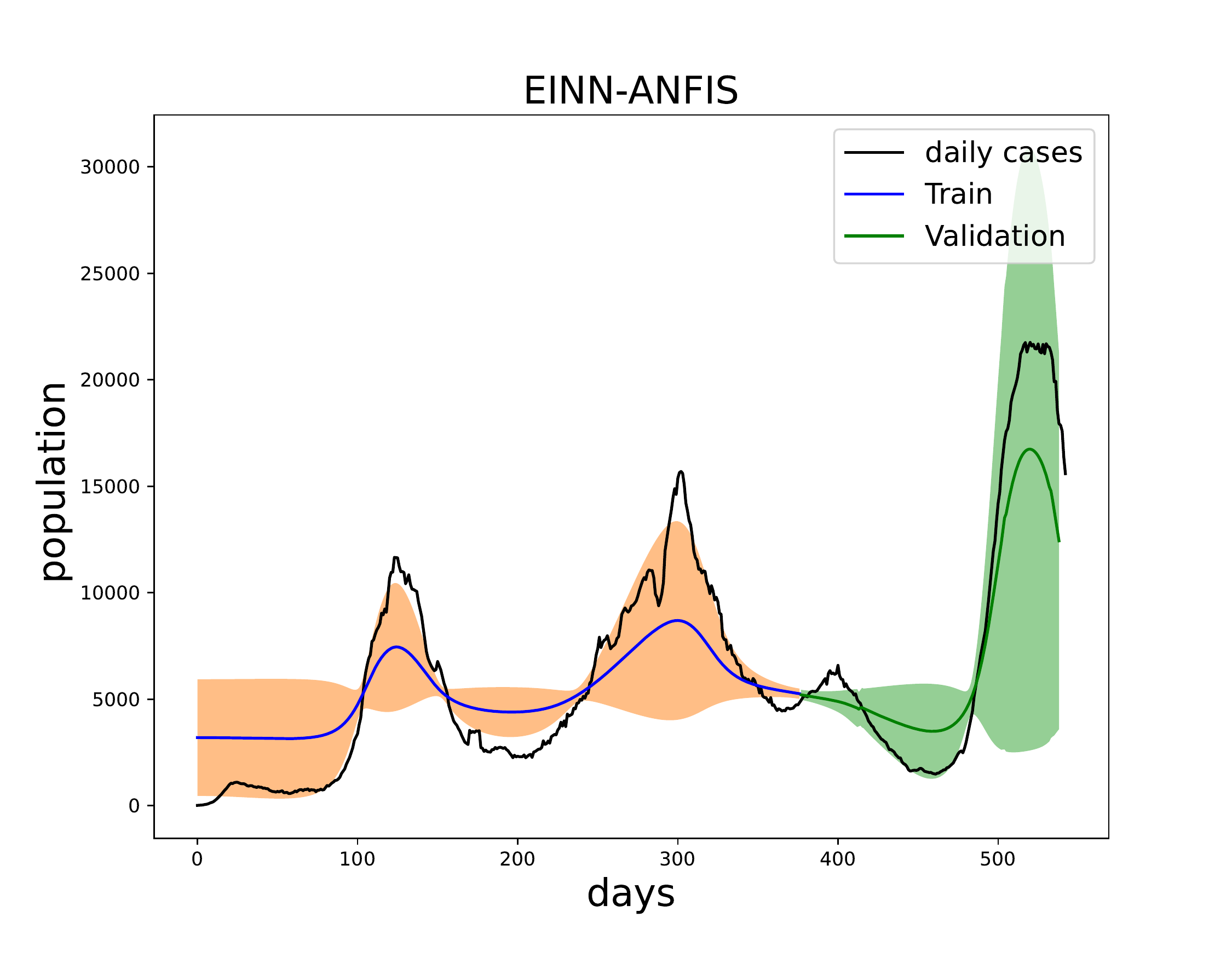}}
    \vskip .4in
    \subfloat[LSTM]{\includegraphics[width = 3.2in]{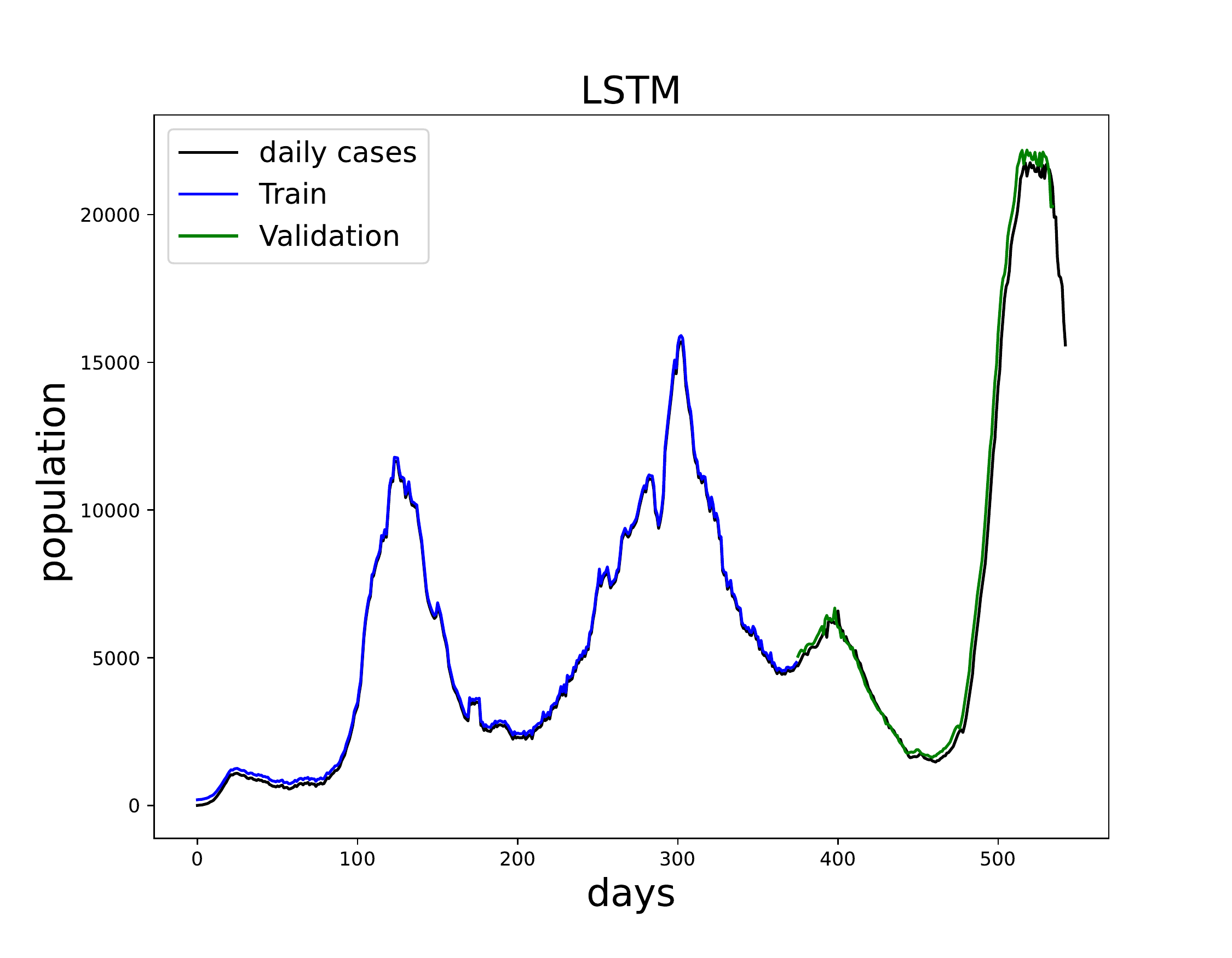}}
    \hfill
    \subfloat[EINN-LSTM]{\includegraphics[width = 3.2in]{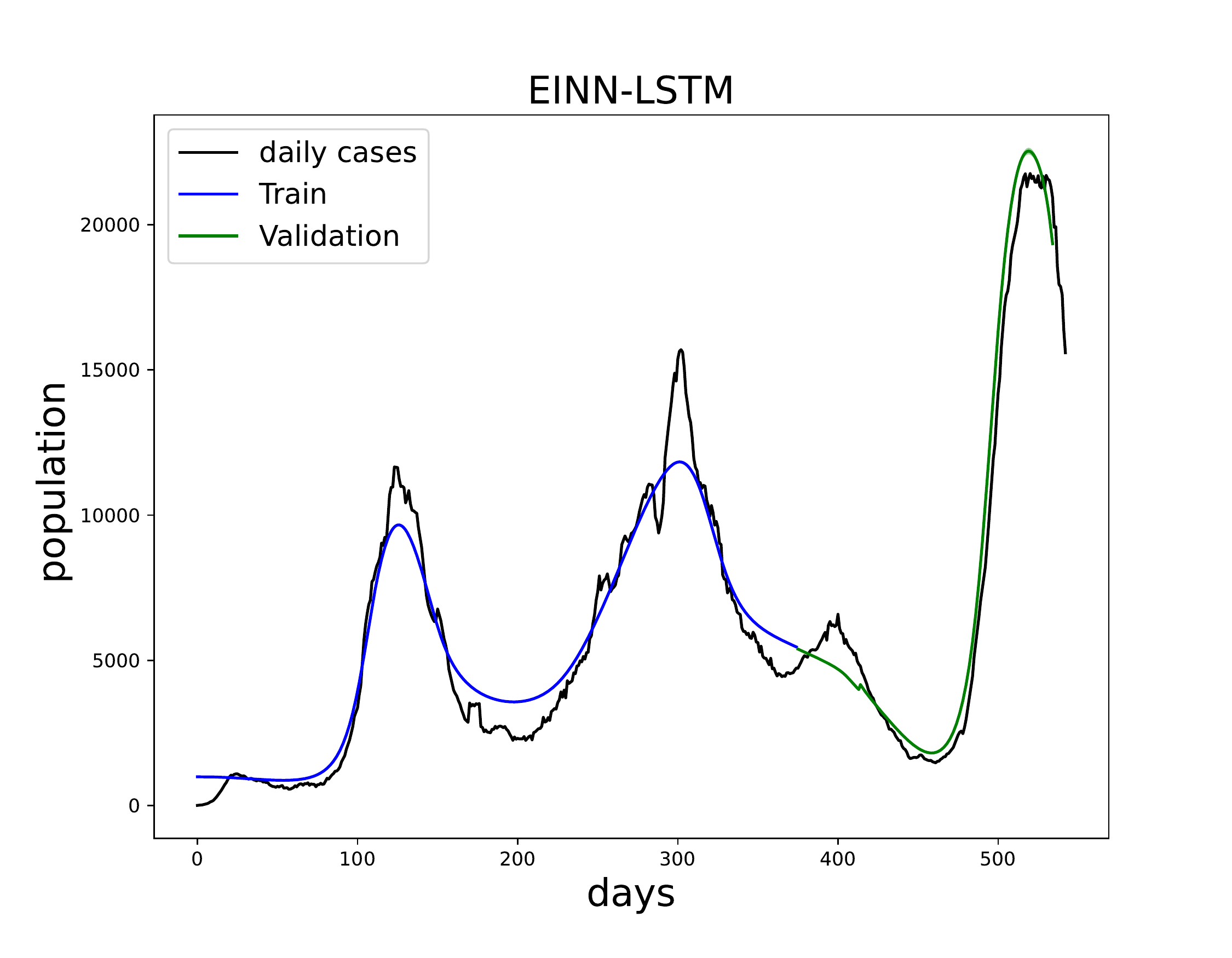}} 
\caption{Florida daily cases forecasting using ANFIS, EINN-ANFIS, LSTM, LSTM-EINN}
\label{florida-anfis}
\end{figure}

As can be observed from these Tables~\eqref{Table:al2}--\eqref{Table:fl2} EINN-ANFIS is an improvement over ANFIS and similarly, EINN-LSTM is an improvement over LSTM.

\section{Performance analysis of error metrics}
\label{metrics}

The following error metrics are used in our data driven simulation:

\begin{itemize}
\item Root Mean Square Error (RMSE):\\
\[RMSE = \sqrt{\frac{1}{N_s}\sum_{i=1}^{N_s}(Y_i - \tilde{Y}_i)^2}\],\\
where $Y$ and $\tilde{Y}$ are the predicted and original values, respectively. \\

\item Mean Absolute Error (MAE):\\
\[MAE = \frac{1}{N_s}\sum_{i=1}^{N_s}|Y_i - \tilde{Y}_i|\].\\

\item Mean Absolute Percentage Error (MAPE):\\
\[MAPE = \frac{1}{N_s}\sum_{i=1}^{N_s}|\frac{Y_i - \tilde{Y}_i}{Y_i}|\].\\

\item Root Mean Squared Relative Error (RMSRE):\\
\[RMSRE = \sqrt{\frac{1}{N_s}\sum_{i=1}^{N_s}(\frac{Y_i - \tilde{Y}_i}{Y_i})^2}\],\\
$N_s$ represents the sample size of the data.
\end{itemize}

In Table~\ref{table2} We provide a comparison of error metrics for EINN using random splits for the training and test data.

\begin{table}[H]
\begin{tabular}{ |p{1.7cm}|p{2.0cm}| p{2.0cm}| p{1.7cm}| p{1.7cm}|}
\hline
$State$ & $RMSE$ & $MAE$ & $MAPE$ & $RMSRE$ \\
\hline
 Florida  & $0.00768642$ & $0.0868862$ & $0.62593067$ & $1.4576211$ \\
 Tennessee  & $0.01003919$ & $0.07956579$ & $1.41574097$ & $3.11372066$ \\
 Alabama  & $0.00772795$ & $0.08293544$ & $0.76998496$ & $0.24813209$ \\
 Missouri  & $0.0083841$ & $0.09985308$ & $1.65703082$ & $0.44569263$  \\
\hline
\end{tabular}
\caption{Error metrics for random split\label{table2}}
\end{table}

\section{Conclusion}
\label{conclusion}

We have presented a data-driven deep learning algorithm that learns time-varying transmission rates of multi-variant in an infectious disease such as COVID-19.
The algorithm we presented learns the nonlinear time-varying transmission rates without a pre-assumed pattern as well as predict the daily cases and daily recovered populations. 
We learn these population groups using only daily cases data. 
This approach is found useful when the dynamics of an epidemiological model such as an SEIR model is impacted by various mitigation measures. 
The algorithm presented in this paper can be adapted to most epidemiology models. 
Using US daily cases data, we demonstrate that the algorithm presented in this work can be combined together with recurrent neural networks and ANFIS for an improved short-term forecast.
This study is seen useful in the event of a pandemic such as COVID-19, where public health interventions and public response and perceptions interfere in the interaction of the compartments in an epidemiology model.

The computer codes will be available at \url{https://github.com/okayode/EINN-COVID}.

\bibliography{mybibfile}

\end{document}